\documentclass[12pt]{article}
\usepackage{amsmath,amssymb,amsthm,fullpage,setspace,hyperref,nicefrac,paralist,booktabs,comment,setspace,natbib,bm,soul}

\onehalfspacing

\hypersetup{colorlinks=true, linkcolor=blue,citecolor=blue}

\bibliographystyle{abbrvnat}
\setcitestyle{authoryear,open={(},close={)}}

\usepackage[small]{titlesec}
\usepackage{caption}
\captionsetup{width=14cm}

\captionsetup[table]{labelfont=bf}

\onehalfspacing

\newtheorem{theorem}{Theorem}

\newtheorem{corollary}{Corollary}

\newtheorem{proposition}{Proposition}

\newcommand{\ee}{\mathrm{e}}

\newcommand{\RR}{\mathbb{R}}
\newcommand{\PP}{\mathbb{P}}
\newcommand{\EE}{\mathbb{E}}

\renewcommand{\P}[1]{{\PP\left[ #1 \right]}}
\newcommand{\E}[1]{{\EE\left[ #1 \right]}}
\def\dd{\mathrm{d}}

\title{{\bf \Large Background Risk and Small-Stakes Risk Aversion}\thanks{We want to thank Ned Augenblick, Nick Barberis, Sebastian Ebert, Armin Falk, Stefano DellaVigna, Paul Heidhues, Matthew Rabin, Todd Sarver, Dmitry Taubinsky, Charlie Sprenger and Juuso V\"alim\"aki for helpful comments and discussions.}}

\author{ \large Xiaosheng Mu\thanks{Princeton University. Email: xmu@princeton.edu. Xiaosheng Mu acknowledges the hospitality of Columbia University and the Cowles Foundation at Yale University, which hosted him during parts of this research.} \ \ \ Luciano Pomatto\thanks{Caltech. Email: 	luciano@caltech.edu.} \ \ \ %
Philipp Strack\thanks{Yale University. Email:  philipp.strack@yale.edu. Philipp Strack was supported by a Sloan fellowship.} \ \ \ %
Omer Tamuz\thanks{Caltech. Email: tamuz@caltech.edu. Omer Tamuz was supported by a grant from the Simons Foundation (\#419427), by a BSF grant (\#2018397) and by a Sloan fellowship.}}

\date{\today}

\begin{document}

\maketitle

\begin{abstract}



We show that under plausible levels of background risk, no theory of choice under risk---such as expected utility theory, prospect theory, or rank dependent utility---can simultaneously satisfy the following three economic postulates:
(i) Decision makers are risk-averse over small gambles,
(ii) they respect stochastic dominance, and
(iii) they account for background risk.
%

\end{abstract}

%

\section{Introduction}

 How humans evaluate the trade-off between risks and rewards is one of the core questions in economics. In solving this trade-off many people exhibit \emph{small-stakes risk aversion}: small, actuarially favorable gambles---such as a lottery where one loses $\$10$ or wins $\$11$ with equal probability---are often rejected.
  In this paper we study how background risk (stemming, for instance, from investments in the stock market or health conditions) affects risk attitudes towards small gambles. We argue that for plausible levels of background risk, no preference can explain small-stakes risk aversion without violating monotonicity with respect to first-order stochastic dominance.
 
%

 For example, we show that the aforementioned gamble, where one loses $\$10$ or wins $\$11$ with equal probability, must be accepted by any decision maker with monotone preferences, provided that she faces an independent background risk with exponential tails and standard deviation as small as $\$200$. 
 This insight is not specific to this particular example. Our main result, Theorem~\ref{thm:main}, establishes that for every bounded gamble with strictly positive mean there exists a level of background risk such that accepting the gamble is dominant with respect to first-order stochastic dominance. 
 We furthermore provide a simple bound on how large the background risk needs to be for this conclusion to hold for a given gamble.
 This bound implies that under plausible levels of background risk it is stochastically dominant to accept many gambles that are frequently rejected.
 
 \medskip
 
 To illustrate the result, consider a decision maker who is facing a binary gamble $X$ under which she gains $G$ dollars or loses $L$ dollars with probability 1/2. If the gamble was taken in isolation, then the choice of whether or not to accept it would depend on her preferences between $X$ and a sure outcome of $0$. But if the decision maker is facing an independent background risk $W$ regarding her wealth, then the relevant choice is between $W$, if the gamble is rejected, and $W + X$, if the gamble is accepted.
%
\bgroup
\def\arraystretch{1.4}
\begin{table}[t]
    \centering
    \begin{tabular}{l|l l l}
        \toprule
         \textbf{Gamble} & \multicolumn{3}{c}{\textbf{StDeviation of Background Risk: $\sigma$}} \\
         Gain/Loss & Laplace & Logistic  
         & Normal\textsuperscript{*} \\
         \midrule
         $\$11/\$10$ & $\sigma \geq \$156$ & $\sigma \geq \$200$ & $\sigma \geq \$3319$\\
         $\$55/\$50$ & $\sigma \geq \$779$ & $\sigma \geq \$999$ & $\sigma \geq \$7422$\\
         $\$110/\$100$ & $\sigma \geq \$1557$ & $\sigma \geq \$1997$ & $\sigma \geq \$10{,}498$\\
         $\$550/\$500$ & $\sigma \geq \$7785$ & $\sigma \geq \$9984$ & $\sigma \geq \$23{,}526$\\
         $\$1100/\$1000$ & $\sigma \geq \$15{,}569$ & $\sigma \geq \$19{,}967$ & $\sigma \geq \$33{,}361$\\
         \bottomrule
    \end{tabular}
    \caption{Standard deviation of background risk sufficient for every decision maker with monotone preferences to accept various fifty-fifty gambles under different distributional assumptions on the background risk. The numbers displayed are bounds derived from our Theorem \ref{thm:main} and Corollary \ref{cor:normal}. \\
    \textsuperscript{*}The normal distribution has mean $\$100{,}000$, and in this case the decision maker's wealth is bounded below by \$0, according to a limited liability assumption which we discuss in \S\ref{sec:model}.}
    \label{tab:main-table}
\end{table}
\egroup
 Table~\ref{tab:main-table} displays different levels of standard deviations and distributional assumptions for the background risk under which $W + X$ dominates $W$ in first-order stochastic dominance. For example, in the case of the fifty-fifty gamble with gain $G=11$ and loss $L=10$, it is dominant to accept  whenever the decision maker's wealth has a standard deviation higher than $\$200$ and is either Laplace or Logistically distributed. 


 The standard deviation of many real-life risks plausibly exceeds this threshold by a large margin. For example, an investor who has $\$100{,}000$ in an S\&P 500 index fund reasonably faces a wealth risk with standard deviation $\$1{,}000$ (or 1\%) for the value of her portfolio at the end of each \emph{day}, and $\$15{,}000$ at the end of the year.\footnote{See, e.g., \cite*{bardgett2019inferring}.} Nevertheless, small-stakes gambles are commonly rejected. \cite*{barberis2006individual} find that among clients of a U.S.\ bank with median wealth exceeding \$10 million, the rejection rate of a hypothetical $\$550/\$500$ gamble is $71\%$. 
 Table~\ref{tab:main-table} suggests that this behavior is inconsistent with these investors taking into account even the short-term background risk they face.

 Table~\ref{tab:main-table} applies to all monotone preferences. Since monotonicity is, by itself, a very weak assumption, we need some conditions on the distribution of the background risk $W$ to ensure that $W+X$ stochastically dominates $W$. In particular, $W$ must have full support and the left tail of the distribution must be sufficiently thick. However, we also show that if the decision maker is protected by limited liability, so that her final wealth cannot go below a certain threshold, then no assumptions on the tails of the background risk are required. Moreover, for particular preference specifications, our results continue to hold under smaller background risks. For example, in \S\ref{appx:CPT} of the Appendix, we calculate the level of background risk needed for a decision maker with Cumulative Prospect Theory preferences to accept various small gambles, under the same parameter values calibrated by \cite{tversky1992advances}. Compared to Table~\ref{tab:main-table}, the required levels of standard deviation are significantly smaller: a fifty-fifty gamble with gain $G = 11$ and loss $L = 10$ must be accepted whenever the background risk has a standard deviation higher than \$62, and is either Laplace, Logistic or Normally distributed.


 \medskip

 Our analysis shows a tension between three natural requirements for any theory of choice under risk: (i) risk aversion over small gambles, (ii) monotonicity with respect to first-order stochastic dominance, and (iii) choices are made accounting for background risk.
 As (i) is commonly observed in real world choices, and relaxing (ii) is widely considered unappealing, our results suggest that theories that do not account for narrow framing---whereby independent sources of risk are evaluated separately by the decision maker---cannot explain commonly observed choices among risky alternatives.


%
%
%
 

\paragraph{Related Literature.}
 \cite{arrow1970essays} and \cite{pratt1964risk} show that under expected utility and a twice-differentiable utility function, a decision maker accepts any actuarially favorable gamble, provided that it is scaled to be small enough. 
 \cite{rabin2000calibration} shows that the degree of concavity necessary for expected utility theory to explain small-stakes risk aversion leads to implausible choices over large lotteries.\footnote{For example, given any risk-averse expected utility preference, if a gamble where one loses $\$100$ or wins $\$110$ with equal probability is rejected at all wealth levels below $\$300{,}000$, then a gamble where one loses $\$2000$ and wins $\$12{,}000{,}000$ with equal probability must also be rejected at wealth levels below $\$290{,}000$. See also \cite{Hansson1988calibration} for an early example illustrating this point. \cite{zambrano2020risk} provides further results on calibrating expected utility preferences.}  
 The literature has then suggested two different ways of explaining small-stakes risk aversion: (i) Considering more general preferences that allow for \emph{loss aversion} or \emph{first-order risk aversion} and (ii) relaxing the assumption that small gambles are evaluated in context by allowing for \emph{narrow framing}.\footnote{For example \cite{RabinThaler2001anomaly} page 226 writes: ``If expected utility theory doesn’t explain the modest-scale risk aversion we observe, what does? We think that the right explanation incorporates two concepts that have been mentioned before in the ``Anomalies'' series: loss aversion and mental accounting.'' Narrow framing is also emphasized in \cite{CoxSadiraj2006calibration}, \cite{Rubinstein2006dilemma} and \cite{harrison2018integration} among other papers.}
 
 A variety of alternatives to expected utility theory feature first-order risk aversion \citep{SegalSpivak1990firstorder,AngBekaertLiu2005stocks}, i.e.\ non-vanishing risk aversion over small risks.
 Among the most prominent ones are prospect theory \citep{KahnemanTversky1979prospecttheory}, rank dependent utility \citep{Quiggin1982rdu}, disappointment aversion \citep{Gul1991disappoint} and reference dependent preferences \citep{koszegi2007reference}.\footnote{In the context of stochastic choice, \cite*{KhawEtAl2020cognitive} provide a new model of first-order risk aversion based on cognitively imprecise representation of the decision environment.} While these theories can explain small-stakes risk aversion when such risks are evaluated in isolation, subsequent work in the literature suggests that they may be unable to do so once background risk is taken into account.
 
 In a dynamic context, \cite*{barberis2006individual} show that in the presence of large background risk, Rabin's critique extends to recursive disappointment-averse preferences.\footnote{More recently, \cite{sarver2018dynamic} proposes a new class of dynamic mixture-averse preferences that can reconcile risk attitudes toward both small and large gambles, provided the amount of background risk is moderate.} \cite{koszegi2007reference} illustrate through an example that a decision maker with reference dependent preference will for some gambles ``approach risk neutrality [...] even for relatively limited amounts of background risk.''\footnote{\cite{gollier2004economics} surveys the earlier literature on background risk within expected utility.} More generally, \cite{SafraSegal2008calibration} show that for any risk-averse preference admitting a G\^{a}teaux differentiable representation, a decision maker who rejects a small gamble with positive mean under all background risks must also reject highly favorable large gambles. The literature has thus established that for different classes of risk-averse preferences, small-stakes risk aversion under all background risks leads to the rejection of large attractive gambles. 
 
 Our analysis strengthens this conclusion in three main ways: 
 First, unlike the existing literature, we do not focus on a specific class of preferences, but instead consider general monotone preferences, regardless of whether they are risk-averse, risk-loving or neither. Our analysis therefore applies to rank-dependent utility, disappointment averse preferences, reference dependent preferences, as well as many other common models. 
 Second, our results do not require assumptions on the decision maker's behavior with regard to large, and possibly hypothetical, gambles.

 Finally, our results are not limited to showing the existence of a distribution of background risk under which accepting a given small gamble is dominant. Instead, we show that this conclusion holds whenever the background risk is sufficiently large, in a sense that we can exactly quantify. Thus, whenever a decision maker is observed to reject small gambles despite facing considerable background risks, an analyst can conclude that she either violates stochastic dominance or performs narrow framing (so that small gambles are not fully combined with background risks). Our stronger result thus facilitates experimental tests of narrow framing. 
 
 
 \cite{tarsney2018exceeding} and \cite*{pomatto2018stochastic} independently show that background risk can induce stochastic dominance between gambles that are not already ranked. 
 In particular, \cite*{pomatto2018stochastic} prove that if $X$ and $Y$ are random variables with $\E{X} > \E{Y}$, then there exists an independent random variable $Z$ such that $X+Z$ first-order stochastically dominates $Y+Z$. In their result, the random variable $Z$ is tailored to the particular $X$ and $Y$ involved. In contrast, in this paper we show that all sufficiently large heavy-tailed background risks ensure stochastic dominance, and explicitly characterize the necessary size of the background risk.
 
 Our analysis emphasizes the importance of narrow framing for theories of choice under risk. The idea that narrow framing affects decision makers' risk attitudes goes back to \cite{BenartziThaler1995myopic}, who propose an explanation of the equity premium puzzle via loss aversion and narrow framing. Later work by \cite{GneezyPotters1997experiment} and \cite{ThalerEtAl1997experiment} experimentally tests the hypothesis that decision makers who account for more background risk tend to be more tolerant of independent risky gambles. To the best of our knowledge, our paper is the first to demonstrate theoretically that whenever a plausible level of background risk is taken into consideration, minimal assumptions on preferences lead to risk-neutral behavior for small gambles. Interestingly, the connection between narrow framing and stochastically dominant choices is also explored in \cite{Rabin2009dominance}, who show that any decision maker who engages in narrow framing and has non-CARA preferences will make choices that, when combined, violate stochastic dominance.

\section{Model}\label{sec:model}

 A decision maker faces a choice between accepting or rejecting a gamble described by a bounded random variable $X$ that takes negative values with positive probability.\footnote{We thus rule out the trivial case where $X \geq 0$ and the gamble almost surely pays out a positive amount.} The decision maker's wealth $W$ is random and independent of $X$, and accepting the gamble leads to final wealth $W + X$. We interpret $W$ as background risk the decision maker faces when considering whether or not to accept the gamble. We assume $W$ is distributed according to a density $g \colon \RR \to \RR_{+}$ that has full support, is eventually decreasing, and is piece-wise continuously differentiable.\footnote{Formally, $g$ is continuous, and there exists a positive integer $n$ and numbers $-\infty = a_0 < a_1 < \dots < a_{n-1} < a_n = \infty$ such that $g$ is continuously differentiable on each of the open intervals $(a_{i-1}, a_i)$. Moreover, we require for each $i$, the limits $\lim_{x \nearrow a_{i}} g'(x)$ and $\lim_{x \searrow a_{i}} g'(x)$ exist and are finite. Using the continuity of $g$ and the Mean Value Theorem, it can be shown that these limits are the left and right derivatives of $g$ at $a_i$, respectively.} This is a weak technical assumption that holds for many common distributions, like the Normal, Logistic, or Laplace distributions.
 
 \paragraph{Monotone Preferences.}

 When $W + X$ dominates $W$ with respect to first-order stochastic dominance, we say that accepting $X$ is \textit{dominant}. We make no assumptions on the decision maker's behavior except that she accepts every dominant gamble.

 
 Recall that a random variable $Y$ \textit{first-order stochastically dominates} another random variable $Z$ if for every $a \in \RR$ it holds that $\P{Y \geq a} \geq \P{Z \geq a}$. Within the expected utility framework, a preference respects first-order stochastic dominance if and only if it is represented by an increasing utility function. More generally, a preference is monotone with respect to stochastic dominance if and only if it satisfies two conditions: (i) the preference between any two random variables depends only on their distributions, and (ii) $Y$ is preferred to $Z$ whenever $Y \geq Z$ almost surely.\footnote{This equivalent formulation is based on the well-known characterization that if $Y$ first-order stochastically dominates $Z$, then there exist two other random variables $\tilde{Y},\tilde{Z}$ with the same distributions as $Y$ and $Z$, respectively, and such that $\tilde{Y} \geq \tilde{Z}$ almost surely.} Thus, the assumption that behavior is consistent with first-order stochastic dominance expresses the idea that the decision maker's choice between accepting or rejecting the gamble $X$ at wealth $W$ depends only on the distributions of $W + X$ and $W$, and that more money is preferred over less.
 
 As discussed in the introduction, consistency with respect to stochastic dominance is a weak assumption satisfied by virtually all preference specifications studied in decision theory and behavioral economics. In fact, the assumption is often satisfied even when choice behavior is not described by means of a single complete and transitive preference relation over wealth distributions, as in models of random expected utility. 
 
 \paragraph{Limited Liability.} In many contexts it is natural to assume that the decision maker is protected by limited liability, in the sense that her wealth cannot go below a  bound $\ell \in \mathbb{R}$. Given a random variable $Z$, we denote by $(Z)_\ell = \max\{Z,\ell\}$ the variable truncated at $\ell$. Under limited liability, comparison between risky prospects boils down to a comparison between their truncated counterparts: The decision maker receives the amount $(W+X)_\ell$ when she accepts the gamble, and $W_\ell$ when she rejects it. Accordingly, we say that accepting $X$ is \textit{dominant} if $(W + X)_\ell$ dominates $(W)_\ell$ in first-order stochastic dominance. Thus, we assume the decision maker prefers more wealth than less, and prefers a lower probability of reaching the liability bound. Aside from this, we make no assumptions on what reaching the bound implies for the decision maker.

\section{Main Results}
 
 Before stating the formal results we introduce two indices for quantifying the magnitude of the background risk and the riskiness of a gamble.
 
\paragraph{Size of the Background Risk} Given a background risk $W$ with  
density $g$, we define its \textit{exponential size} $\mathrm{S}(W)$ as 
\[
    \mathrm{S}(W) = \left( \sup_{a} \frac{g'(a)}{g(a)} \right)^{-1},
\]
 where the supremum is taken over points $a$ where $g$ is differentiable. 
 We say that $W$ is \textit{heavy left-tailed} if $\mathrm{S}(W)>0$. This restriction excludes distributions with thin tails such as the Normal distribution, but includes common parametric families such as Logistic or Laplace.
 
 Intuitively, the exponential size is a measure of how likely large losses are relative to small ones. The larger $\mathrm{S}(W)$, the more slowly the density $g$ increases, and hence the thicker is the left tail of the distribution. More formally, the density $g(a)$ associated with a loss $a<0$ must lie above the exponential function $g(0) \cdot \ee^{\frac{a}{\mathrm{S}(W)}}$ and thus cannot vanish faster than exponentially. Similar to the standard deviation, the exponential size is positive homogeneous and independent of the location of $W$ (i.e.\ it satisfies $\mathrm{S}(tW) = t\mathrm{S}(W)$ for $t > 0$ and $\mathrm{S}(W) = \mathrm{S}(W+c)$ for any $c \in \mathbb{R}$).\footnote{The exponential size is however not symmetric: in general $\mathrm{S}(-W) \neq \mathrm{S}(W)$. This is reflective of the fact that the exponential size mainly concerns the left tail of the background risk distribution.}

 

 \paragraph{Riskiness of the Gamble} We quantify the riskiness of a gamble in terms of the Aumann-Serrano index. Given a gamble $X$ with positive expectation and positive probability of being negative, \cite{aumann2008economic} define its riskiness $\mathrm{R}(X)$ as the reciprocal of the level of absolute risk aversion at which a decision maker with CARA expected utility preferences is indifferent between accepting and rejecting $X$. Formally, $\mathrm{R}(X)$ is defined as the (unique) positive real number solving the equation
 \[
     \EE\left[\ee^{-\frac{1}{\mathrm{R}(X)}X}\right] = 1 \,.
 \]
 Intuitively, a gamble that is assigned a higher index is riskier because it is accepted by a smaller pool of risk-averse CARA decision makers. 
 
 \cite{aumann2008economic} provide an axiomatic foundation for the index. The riskiness index can also be related to simpler concepts such as the expectation of a gamble and the size of its support. In particular, whenever a gamble $X$  is supported in an interval $[-M,M]$ and has expectation $\epsilon > 0$, its riskiness index satisfies $\mathrm{R}(X) \leq \frac{M^2}{\epsilon}$, as we show in Proposition~\ref{prop:risk-bound} in the Appendix.
 
 \medskip
 
 The next theorem is the main technical result of the paper. It shows that it is dominant to accept any gamble that has positive expectation and whose riskiness is bounded by the exponential size of the background risk. The complete proof is provided in \S\ref{sec:main-proof} below. 
 
\begin{theorem}
\label{thm:main}
Under any given background risk $W$, it is dominant to accept every gamble $X$ with positive expectation and riskiness $\mathrm{R}(X) \leq \mathrm{S}(W)$.
\end{theorem}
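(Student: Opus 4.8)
The plan is to reduce the conclusion to a pointwise comparison of cumulative distribution functions and then feed in two ingredients: the heavy-left-tail hypothesis, which controls the density $g$ of $W$ from below by an exponential, and the riskiness hypothesis, which---via the defining equation of the Aumann--Serrano index---yields $\E{\ee^{-\lambda X}}\le 1$ for $\lambda = 1/\mathrm{S}(W)$. Concretely, write $\Phi=F_W$ for the CDF of $W$ (so $\Phi'=g$, legitimate since $g$ is continuous). Accepting $X$ is dominant precisely when $\P{W+X\ge a}\ge\P{W\ge a}$ for all $a$, i.e.\ $\E{\Phi(a-X)}\le\Phi(a)$ for all $a$ by independence of $W$ and $X$ and continuity of $\Phi$; equivalently $\E{\Phi(a)-\Phi(a-X)}\ge 0$. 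Using the fundamental theorem of calculus, for each realization of $X$ one has the sign-correct identity $\Phi(a)-\Phi(a-X)=\int_0^X g(a-y)\,\dd y$ (valid whether $X$ is positive or negative), so it suffices to show $\E{\int_0^X g(a-y)\,\dd y}\ge 0$ for every $a$.

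Next I would extract the two ingredients. Put $\lambda=1/\mathrm{S}(W)=\sup_a g'(a)/g(a)$; the hypothesis $\mathrm{R}(X)\le\mathrm{S}(W)$ forces $\mathrm{S}(W)>0$, so $\lambda\in(0,\infty)$. From $\mathrm{R}(X)\le 1/\lambda$ we get $\lambda\le 1/\mathrm{R}(X)$; since $s\mapsto\E{\ee^{-sX}}$ is convex (as $X$ is bounded) and equals $1$ both at $s=0$ and at $s=1/\mathrm{R}(X)$ (the latter by definition of $\mathrm{R}(X)$), it is at most $1$ on the segment between them, so $\E{\ee^{-\lambda X}}\le 1$. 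Separately, $g'/g\le\lambda$ together with continuity of $g$ and the piecewise-$C^1$ assumption gives $\log g(s_2)-\log g(s_1)\le\lambda(s_2-s_1)$ for $s_2\ge s_1$, i.e.\ $s\mapsto g(s)\ee^{-\lambda s}$ is nonincreasing. Hence $g(a-y)\ge g(a)\ee^{-\lambda y}$ when $y\ge 0$, and $g(a-y)\le g(a)\ee^{-\lambda y}$ when $y\le 0$.

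The core step is the pointwise inequality
\[
  \int_0^x g(a-y)\,\dd y\ \ge\ \frac{g(a)}{\lambda}\bigl(1-\ee^{-\lambda x}\bigr)\qquad\text{for all }x\in\RR .
\]
I would prove it by comparing both sides as functions of $x$: they agree at $x=0$ and, by the density inequalities above, on $[0,x]$ (when $x>0$) the left integrand dominates $y\mapsto g(a)\ee^{-\lambda y}$ while on $[x,0]$ (when $x<0$) it is dominated---in either case the sign of $\dd y$ makes the inequality fall the right way. Substituting $x=X$ and taking expectations (legitimate since $X$ is bounded and $g$ is locally bounded),
\[
  \Phi(a)-\E{\Phi(a-X)}=\E{\int_0^X g(a-y)\,\dd y}\ \ge\ \frac{g(a)}{\lambda}\bigl(1-\E{\ee^{-\lambda X}}\bigr)\ \ge\ 0 ,
\]
which is exactly $F_{W+X}(a)\le F_W(a)$ for all $a$, i.e.\ accepting $X$ is dominant.

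I expect the main obstacle to be the pointwise inequality for realizations $x<0$---the ``gain'' side of the gamble---where the density inequality is reversed and one integrates backwards from $0$, so care is needed to check that the two sign reversals cancel; the remaining points (convexity of $s\mapsto\E{\ee^{-sX}}$ on the relevant segment, monotonicity of $g(s)\ee^{-\lambda s}$ across the finitely many kinks of $g$, and the interchange of expectation with the bounded integral) are routine.
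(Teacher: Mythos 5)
Your proposal is correct and follows essentially the same route as the paper: both reduce dominance to $\E{G(a-X)}\le G(a)$ for all $a$, translate the riskiness hypothesis into $\E{\ee^{-\lambda X}}\le 1$ with $\lambda=1/\mathrm{S}(W)$, and exploit that $a\mapsto g(a)\,\ee^{-\lambda a}$ is nonincreasing (including across the finitely many kinks of $g$). The only difference is presentational: where the paper packages the last step as an Arrow--Pratt comparison, writing $G(a)=\phi(\ee^{a/\mathrm{S}(W)})$ with $\phi$ concave and applying Jensen, you reprove that special case by hand via the pointwise bound $\Phi(a)-\Phi(a-x)\ge\frac{g(a)}{\lambda}\bigl(1-\ee^{-\lambda x}\bigr)$ and then take expectations.
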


 
This result implies that accepting a gamble $X$ is dominant provided it is not excessively risky compared to the size of the background risk $W$. An important corollary is the following: 
 
\begin{corollary}\label{cor:scaled-gamble}
Suppose the background risk $W$ is heavy left-tailed, i.e.\ $\mathrm{S}(W) > 0$, or the decision maker is protected by limited liability for some liability bound $\ell$. Then, for every gamble $X$ with positive expectation, accepting the scaled gamble $tX$ is dominant for all $t>0$ small enough.
\end{corollary}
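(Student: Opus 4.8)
The plan is to apply Theorem~\ref{thm:main} to the rescaled gamble $tX$, exploiting the positive homogeneity of the Aumann--Serrano index. Since $tX$ has positive expectation $t\,\E{X}>0$ and remains negative with positive probability, $\mathrm{R}(tX)$ is well defined, and substituting $tX$ for $X$ turns the defining equation $\E{\ee^{-\theta X}}=1$ into $\E{\ee^{-(t\theta)X}}=1$, so $\mathrm{R}(tX)=t\,\mathrm{R}(X)$; as $\mathrm{R}(X)$ is a fixed finite positive number, $\mathrm{R}(tX)\to 0$ as $t\to 0$. Hence, if $\mathrm{S}(W)>0$, the claim is immediate: for every $t$ with $0<t\le \mathrm{S}(W)/\mathrm{R}(X)$ we have $\mathrm{R}(tX)\le \mathrm{S}(W)$, and Theorem~\ref{thm:main} makes accepting $tX$ dominant. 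If, in addition, the decision maker is protected by limited liability, the same conclusion for the truncated variables follows because $z\mapsto\max\{z,\ell\}$ is nondecreasing, and nondecreasing maps preserve first-order stochastic dominance.

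It remains to treat the limited-liability case when no assumption is placed on $\mathrm{S}(W)$. The idea is that the far-left tail of $W$ is irrelevant to the comparison between $(W+tX)_\ell$ and $(W)_\ell$, since everything below $\ell$ is absorbed by the truncation, so we may replace that tail by a heavy one. Fix any $\epsilon>0$ and let $\hat W$ have density $\hat g$ that coincides with $g$ on $[\ell-\epsilon,\infty)$ and equals $g(\ell-\epsilon)\,\ee^{\mu(x-(\ell-\epsilon))}$ for $x<\ell-\epsilon$, where $\mu=g(\ell-\epsilon)/\PP[W<\ell-\epsilon]>0$; this choice of $\mu$ makes $\hat g$ a probability density with $\PP[\hat W\le\ell]=\PP[W\le\ell]$, and $\hat g$ is continuous, piecewise continuously differentiable and eventually decreasing, so it is a legitimate background risk. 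Moreover $\mathrm{S}(\hat W)>0$: on $(-\infty,\ell-\epsilon)$ the ratio $\hat g'/\hat g$ equals the constant $\mu$, and on $[\ell-\epsilon,\infty)$ it equals $g'/g$, which is bounded above there because $g$ is positive, continuous, piecewise $C^1$ and eventually decreasing. Now combine three facts. (i) By the previous paragraph applied to $\hat W$, for all small $t$ the variable $\hat W+tX$ first-order stochastically dominates $\hat W$, hence $(\hat W+tX)_\ell$ dominates $(\hat W)_\ell$. (ii) $(\hat W)_\ell$ and $(W)_\ell$ have the same distribution, since $\hat g=g$ on $[\ell,\infty)$ and $\PP[\hat W\le\ell]=\PP[W\le\ell]$. (iii) For all sufficiently small $t$, $(W+tX)_\ell$ and $(\hat W+tX)_\ell$ have the same distribution: for $a\le\ell$ both put mass one on $[a,\infty)$, and for $a>\ell$ we have $\PP[(W+tX)_\ell\ge a]=\PP[W+tX\ge a]=\int\PP[W\ge a-tx]\,\mu_X(\dd x)$, where $X$ is bounded, so for $t$ small the argument $a-tx$ exceeds $\ell-\epsilon$ over the support of $X$; since $\hat g=g$ on $[\ell-\epsilon,\infty)$, this integral is unchanged if $W$ is replaced by $\hat W$. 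Chaining (iii), (i) and (ii): $(W+tX)_\ell$ has the same distribution as $(\hat W+tX)_\ell$, which dominates $(\hat W)_\ell$, which has the same distribution as $(W)_\ell$; thus accepting $tX$ is dominant.

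The one step that demands genuine care is the construction and verification of $\hat W$ --- in particular the claim $\mathrm{S}(\hat W)>0$, which is where the regularity hypotheses on $g$ (full support, piecewise continuous differentiability, eventual monotonicity) actually enter. The remaining ingredients are routine: the homogeneity $\mathrm{R}(tX)=t\,\mathrm{R}(X)$, the stability of first-order stochastic dominance under nondecreasing maps, and one appeal to Theorem~\ref{thm:main}.
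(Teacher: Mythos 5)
Your proof is correct, and the heavy-left-tail half coincides with the paper's argument (positive homogeneity $\mathrm{R}(tX)=t\,\mathrm{R}(X)$ plus Theorem~\ref{thm:main}). Where you genuinely diverge is the limited-liability half. The paper proves a localized analogue of Theorem~\ref{thm:main} (Proposition~\ref{prop:limited-liability}): it observes that under truncation at $\ell$ one only needs $\E{G(a-X)}\le G(a)$ for $a\ge\ell$, so the Arrow--Pratt comparison between $G$ and the CARA utility $\ee^{a/s}$ is only required on $[\ell-\max[X],\infty)$, and it then bounds $\sup_{a\ge \ell-1} g'(a)/g(a)$ using positivity, piecewise continuous differentiability and eventual monotonicity of $g$. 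You instead keep Theorem~\ref{thm:main} as a black box and modify the background risk: you splice an exponential left tail onto $g$ below $\ell-\epsilon$, calibrated so that $\hat g$ remains a density with the same mass below $\ell-\epsilon$, verify that $\hat W$ satisfies the paper's regularity assumptions and has $\mathrm{S}(\hat W)>0$ (same local boundedness of $g'/g$ that the paper uses), and then transfer the dominance back through the truncation via the two distributional identities $(\hat W)_\ell\overset{d}{=}(W)_\ell$ and, once $t\max\vert X\vert\le\epsilon$, $(\hat W+tX)_\ell\overset{d}{=}(W+tX)_\ell$. Both routes are sound; the paper's buys a quantitative statement (Proposition~\ref{prop:limited-liability} directly yields Corollary~\ref{cor:normal} with the explicit bound $\sigma\ge\sqrt{\mathrm{R}(X)(\mu-\ell+\max[X])}$), whereas your tail-surgery argument is more modular---no re-proof of the dominance criterion is needed---but as written gives only the qualitative small-$t$ conclusion unless you track the constants $\epsilon$ and $\sup_{a\ge\ell-\epsilon}g'(a)/g(a)$.
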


In the case of limited liability, the next result---another corollary of Theorem~\ref{thm:main}---shows that a large enough Normal background risk suffices to make any actuarially favorable gamble dominant.

\begin{corollary}\label{cor:normal}
Under a background risk distributed Normally with mean $\mu$ and standard deviation $\sigma$, it is dominant for a decision maker with limited liability bound $\ell$ to accept every gamble $X$ with positive expectation, riskiness $\mathrm{R}(X)$ and maximum $\max[X]$ if 
$$
  \sigma \geq \sqrt{\mathrm{R}(X)(\mu-\ell+\max[X])}.
$$
\end{corollary}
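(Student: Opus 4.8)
The plan is to deduce this from Theorem~\ref{thm:main}, applied not to the Normal background risk $W$ itself — for which $\mathrm{S}(W)=0$ — but to an auxiliary background risk $\tilde W$ that is heavy left-tailed, agrees with $W$ on an appropriate right half-line, and has exponential size exactly $\sigma^2/(\mu-\ell+\max[X])$; limited liability is what lets us pass from $\tilde W$ back to $W$. We may assume $\mu>\ell-\max[X]$, since otherwise the stated bound on $\sigma$ cannot be met. Write $c:=\ell-\max[X]$, and note $c<\ell$ because $X$ has positive mean, so $\max[X]>0$. The first step is to reduce the limited-liability dominance $(W+X)_\ell\fosd(W)_\ell$ to a one-sided tail inequality: for $a\le\ell$ both $\P{(W+X)_\ell\ge a}$ and $\P{(W)_\ell\ge a}$ equal $1$, and for $a>\ell$ truncation at $\ell$ is irrelevant, so it suffices to show $\P{W+X\ge a}\ge\P{W\ge a}$ for every $a>\ell$. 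What makes this range special is that whenever $a>\ell$, every realization $x$ of $X$ obeys $a-x\ge a-\max[X]>c$; hence both sides depend only on the restriction of the law of $W$ to $(c,\infty)$.

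The second step is to build $\tilde W$. Let $\phi_{\mu,\sigma}$ be the Normal$(\mu,\sigma^2)$ density and set $s:=\sigma^2/(\mu-\ell+\max[X])>0$. I would take $\tilde W$ to have density $\tilde g$ equal to $\kappa\,\phi_{\mu,\sigma}(a)$ on $[c,\infty)$ and to $\kappa\,\phi_{\mu,\sigma}(c)\,\ee^{(a-c)/s}$ on $(-\infty,c)$, where $\kappa>0$ is chosen so that $\tilde g$ integrates to one. One checks that $\tilde g$ is continuous (the two pieces match at $c$), piecewise continuously differentiable with finite one-sided derivatives at $c$, strictly positive, and eventually decreasing, so $\tilde W$ is an admissible background risk. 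Its logarithmic derivative is $\tilde g'/\tilde g\equiv 1/s$ on $(-\infty,c)$ and $\tilde g'/\tilde g=(\mu-a)/\sigma^2\le(\mu-c)/\sigma^2=1/s$ on $(c,\infty)$, so $\sup_a \tilde g'(a)/\tilde g(a)=1/s$, and therefore $\mathrm{S}(\tilde W)=s=\sigma^2/(\mu-\ell+\max[X])$. The hypothesis $\sigma\ge\sqrt{\mathrm{R}(X)(\mu-\ell+\max[X])}$ is exactly the statement $\mathrm{R}(X)\le\mathrm{S}(\tilde W)$, so Theorem~\ref{thm:main} applied to $\tilde W$ gives $\tilde W+X\fosd\tilde W$.

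The last step is to transfer this to $W$. Since $\tilde g=\kappa\,\phi_{\mu,\sigma}$ on $[c,\infty)$, we have $\P{W\ge a}=\kappa^{-1}\P{\tilde W\ge a}$ for every $a\ge c$; and because for $a>\ell$ every shift $a-x$ with $x$ in the support of $X$ lies in $(c,\infty)$, independence of $W$ and $X$ together with the same proportionality gives $\P{W+X\ge a}=\kappa^{-1}\P{\tilde W+X\ge a}$ for $a>\ell$. Combining these with $\tilde W+X\fosd\tilde W$ yields $\P{W+X\ge a}\ge\P{W\ge a}$ for all $a>\ell$, which by the first step is precisely $(W+X)_\ell\fosd(W)_\ell$, i.e.\ accepting $X$ is dominant. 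The main obstacle is spotting the construction itself — that under limited liability one may surgically replace the thin Normal left tail by an exponential one without altering any of the truncated distributions that govern the decision maker's choice; once that is in place, the remaining steps are routine logarithmic-derivative computations and a single invocation of Theorem~\ref{thm:main}.
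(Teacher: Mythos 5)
Your argument is correct, but it follows a genuinely different route from the paper's. The paper first establishes a localized version of Theorem~\ref{thm:main} (Proposition~\ref{prop:limited-liability}): under limited liability one only needs $\E{G(a-X)}\leq G(a)$ for $a\geq\ell$, and since these comparisons involve wealth levels above $\ell-\max[X]$, it suffices that $G$ be a concave transformation of the CARA utility $\ee^{a/s}$ on $[\ell-\max[X],\infty)$, i.e.\ that $\mathrm{R}(X)\leq\bigl(\sup_{a\geq\ell-\max[X]}g'(a)/g(a)\bigr)^{-1}$; Corollary~\ref{cor:normal} then drops out in one line from log-concavity of the Normal density, which puts the supremum at the left endpoint $\ell-\max[X]$. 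You instead treat Theorem~\ref{thm:main} as a black box and apply it to an auxiliary background risk $\tilde W$ obtained by grafting an exponential left tail of rate $1/s$ onto the Normal density below $c=\ell-\max[X]$, then transfer the conclusion back to $W$ via the observation that, for $a>\ell$, both $\P{W+X\geq a}$ and $\P{W\geq a}$ depend only on the law of $W$ on $(c,\infty)$, where $\tilde g$ is proportional to the Normal density. Both proofs are sound and rest on the same underlying fact (only the behavior of $g'/g$ above $\ell-\max[X]$ matters under limited liability); the paper's version packages this as a reusable general proposition at the cost of re-opening the proof of Theorem~\ref{thm:main}, whereas yours is self-contained given the theorem's statement, at the cost of an explicit construction, a normalization constant $\kappa$, and a verification that $\tilde g$ meets the model's regularity assumptions (which it does; indeed the one-sided log-derivatives at $c$ both equal $1/s$, so $\tilde g$ is even $C^1$ there).

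One small caveat: your dismissal of the case $\mu\leq\ell-\max[X]$ is not quite accurate at the boundary $\mu=\ell-\max[X]$, where the stated bound reads $\sigma\geq 0$ and is therefore met, yet your construction degenerates ($s=\infty$). This is repaired in one line: graft an exponential tail of any finite rate $1/s'$ with $s'\geq\mathrm{R}(X)$; on $(c,\infty)$ the Normal log-derivative is then $\leq 0\leq 1/s'$, so $\mathrm{S}(\tilde W)=s'\geq\mathrm{R}(X)$ and the rest of your argument goes through unchanged.
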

As we show in the Appendix, similar lower bounds can be derived for other distributions, and are especially simple to calculate for log-concave densities such as the Normal density. 

\bigskip
 
Theorem~\ref{thm:main} and its corollaries allow us to provide quantitative estimates for the level of background risk under which accepting a gamble is dominant. To illustrate, consider a gamble $X$ that pays $\$110$ and $-\$100$ with equal probability and a background risk $W$ having Laplace distribution, i.e.\ following a density $g(a) = \frac{1}{2\lambda} \ee^{-\vert a \vert/\lambda}$. The parameter $\lambda$ coincides with the exponential size of $W$, and the distribution has standard deviation $\sigma = \sqrt{2}\lambda$. It follows from Theorem~\ref{thm:main} that it is dominant to accept the gamble $X$ as long as the standard deviation of $W$ satisfies
\[
    \sigma \geq \sqrt{2} \mathrm{R}(X) \approx \$1557.
\]
If instead $W$ follows a Logistic distribution with standard deviation $\sigma$, then its exponential size is $\mathrm{S}(W) = \frac{\sqrt{3}}{\pi}\sigma$. Thus accepting the gamble is dominant provided
 \[
    \sigma \geq \frac{\pi}{\sqrt{3}} \mathrm{R}(X) \approx \$1997.
 \]
For a decision maker with limited liability bound $\ell=0$ and Normally distributed $W$ with mean $\$100{,}000$, it is dominant to accept $X$ if
\[
    \sigma \geq \sqrt{\mathrm{R}(X) \cdot 100{,}110} \approx \$10{,}498.
\]
Table~\ref{tab:main-table} is constructed from similar calculations.

 \section{Proof of Theorem~\ref{thm:main}}
 \label{sec:main-proof}
Let $s = \mathrm{S}(W)$, and let $G$ denote the c.d.f.\ of $W$. The result is vacuous when $s = 0$, so we focus below on $s > 0$. A gamble $X$ has positive expectation and Aumann-Serrano index $\mathrm{R}(X) \leq s$ if and only if it is accepted by a decision maker with constant absolute risk aversion of $1/s$, i.e.\ if and only if
\begin{equation}\label{eq:CARA-conditions}
    \E{\ee^{-\frac{1}{s} X}} \leq 1.
\end{equation}
On the other hand, accepting the gamble $X$ is dominant if
\[
    \P{W + X \leq a}\leq \P{W \leq a} = G(a) \, \text{~for all~} a\in\RR \,.
\]
Since the gamble $X$ and the background risk $W$ are independent, we have that
\[
    \P{W + X \leq a} = \P{W \leq a - X} = \E{ G(a-X) }. 
\]
Thus, accepting $X$ is dominant if and only if
\begin{equation}\label{eq:eu-representation}
    \E{ G(a-X) } \leq G(a) \, \text{~for all~} a\in\RR \,.
\end{equation}

Inequality \eqref{eq:eu-representation} can be interpreted as saying that a decision maker with expected utility preferences and utility function $G$ rejects the gamble $-X$ at every wealth level $a$. This analogy is useful as it allows us to understand stochastic dominance through the behavior of a hypothetical expected utility decision maker whose utility coincides with the distribution $G$ of the background risk. The preference defined by $G$ is in general neither globally risk-averse nor risk-loving. Since the expectation of $-X$ is negative, equation \eqref{eq:eu-representation} is equivalent to imposing that the utility function $G$ is ``not too risk-loving.'' Below we formalize this intuition.

 
As shown by \eqref{eq:CARA-conditions} and \eqref{eq:eu-representation} above, to say that it is dominant to accept every gamble $X$ with positive mean and riskiness $\mathrm{R}(X) \leq s$ is equivalent to saying that for every gamble $X$,
\begin{equation}\label{eq:diffidence}
    \E{\ee^{-\frac{1}{s} X}} \leq 1 \implies \E{ G(a - X) } \leq G(a) \, \text{~for all~} a\in\RR.
\end{equation}
That is, any gamble $-X$ that is rejected by a decision maker with (risk-loving) CARA utility $U(a) = \ee^{\frac{a}{s}}$ is also rejected by a decision maker with utility $G$, at all wealth levels. In other terms,
we require $G$ to be globally more risk-averse than $U$ in the sense of Arrow-Pratt. 

To show this, let us first suppose for simplicity that the density $g$ is everywhere differentiable. In this case, since the exponential size of $W$ is $s$, we have $g'(a)/g(a) \leq 1/s$ for every $a$. Thus, the Arrow-Pratt index of risk aversion is everywhere higher for $G$ than for $U$:
\[
    -\frac{G''(a)}{G'(a)} = -\frac{g'(a)}{g(a)} \geq -\frac{1}{s} = - \frac{U''(a)}{U'(a)}  \, \text{~for all~} a \in \RR\,.
\]
It follows that $G$ is indeed more risk-averse than $U$.

For the general case where $g$ is only piece-wise continuously differentiable, note that $U(a) = \ee^{\frac{a}{s}}$ is a strictly increasing function, enabling us to write $G(a) = \phi\left(U(a)\right) = \phi(\ee^{\frac{a}{s}})$ for some increasing function $\phi$ defined on $(0, \infty)$. Our goal is to show that $\phi$ is concave, so that $G$ is more risk-averse than $U$. Note that 
\[
g(a) = G'(a) = \phi'(\ee^{\frac{a}{s}}) \cdot \frac{1}{s} \ee^{\frac{a}{s}}.
\]
Thus $\phi$ is concave if and only if $\phi'$ is a decreasing function, which in turn is equivalent to $g(a) \cdot \ee^{-\frac{a}{s}}$ being decreasing in $a$. Since $g$ is piece-wise continuously differentiable, we know that even if $g$ is not differentiable at some point $a$, the left and right derivatives do exist, and they also satisfy $g'(a)/g(a) \leq 1/s$. From this it follows that $g(a) \cdot \ee^{-a/s}$ is indeed decreasing, concluding the proof of Theorem~\ref{thm:main}.

\bigskip

In fact, the above proof establishes a stronger statement: given any background risk $W$ and any number $s \geq 0$, it is dominant to accept every gamble $X$ with positive expectation and riskiness $\mathrm{R}(X) \leq s$ \emph{if and only if} the background risk satisfies $\mathrm{S}(W) \geq s$. In particular, the bound $\mathrm{R}(X) \leq \mathrm{S}(W)$ in Theorem \ref{thm:main} cannot be improved.

To conclude this section we briefly comment on the case of general background risks that need not have full support or admit a density. Note that \eqref{eq:diffidence} holds if and only if $G(a) = \phi(\ee^{\frac{a}{s}})$ for some concave function $\phi$. Thus, for any $s > 0$, it is dominant to accept every gamble $X$ with $\E{X} > 0$ and $\mathrm{R}(X) \leq s$ only if the background risk distribution $G$ does admit a density $g$, and $g = G'$ almost everywhere. Moreover, in order for $\phi$ to be concave, it must hold that $g(a) \cdot \ee^{-\frac{a}{s}}$ is decreasing in $a$. These conditions are both necessary and sufficient for it to be dominant to accept all such gambles $X$. In particular, it is not necessary for the background risk to be unbounded from above, or for it to have a differentiable or even continuous density.

\section{Discussion}\label{sec:discussion}

 \paragraph{Limited Liability and Small-Stakes Risk Aversion.} As shown by Corollary~\ref{cor:scaled-gamble}, our main result holds under comparatively weaker assumptions in the case of limited liability, where it is not necessary to impose assumptions on the tails of the background risk distribution. This is intuitive, as details of the left tail of $W$ are irrelevant for describing the distributions of the truncated random variables $(W)_\ell$ and $(W + X)_\ell$.
 
 It is useful to compare this fact with a different phenomenon related to limited liability. It is well known that a decision maker, even if risk-averse, can display a behavior that appears to be risk-loving when protected by limited liability. Consider, for instance, an expected utility preference with utility function $u$. Under limited liability, payoffs are truncated below the cutoff $\ell$. This in turn is equivalent to applying a convex transformation to $u$. It is then easy to construct examples where limited liability makes a gamble $X$ that has negative expectation acceptable despite $u$ being globally concave. Thus, a natural question is whether the conclusions of Corollary~\ref{cor:scaled-gamble} are driven by a degree of risk-loving that is a by-product of limited liability, rather than by the effect of background risk.
 
 As we show in \S\ref{sec:limited-liability-risk-aversion} in the Appendix, limited liability does not lead to risk-loving behavior when considering small gambles. We prove that given a gamble $X$ with strictly \emph{negative} expectation, any risk-averse and expected utility decision maker will reject the gamble $tX$ for all $t > 0$ small enough, provided the background risk $W$ is non-trivial. Thus, while limited liability can in some cases make decision makers who are risk-averse appear to be risk-loving, this phenomenon does not occur with small gambles.

 
 \paragraph{Exponential Tails and Constant Absolute Risk Aversion.} As is well known, under CARA expected utility, a decision maker accepts or rejects a gamble independently of wealth levels, and thus also independently of background risk. This apparent contradiction to our main results is explained by the fact that CARA rules out heavy-tailed distributions if we additionally require expected utilities to be finite.\footnote{To be specific, consider a CARA decision maker with risk aversion level $\alpha$, together with a gamble $X$ that she \emph{rejects} without any background risk. Then the riskiness index satisfies $\mathrm{R}(X) \geq \nicefrac{1}{\alpha}$. By Theorem \ref{thm:main}, we can find a background risk $W$ with size $\mathrm{S}(W) \geq \mathrm{R}(X)$, such that $W + X$ first-order stochastically dominates $W$. Nonetheless, since $\mathrm{S}(W) \geq \mathrm{R}(X) \geq \nicefrac{1}{\alpha}$, it can be shown that the CARA expected utilities of $W$ and $W + X$ are both $-\infty$. For example, if $W$ has Laplace distribution with density $g(x) = \frac{\beta}{2} \ee^{-\beta \vert x \vert}$, then $\mathrm{S}(W) \geq \nicefrac{1}{\alpha}$ requires $\beta \leq \alpha$. Denoting by $u(x) = -\ee^{-\alpha x}$ the CARA utility function, it is then easy to see $\int_{-\infty}^{\infty} u(x) \cdot g(x) \,\dd x = -\infty$.}

 Due to the fact that infinite expected utilities cannot be compared, there is no contradiction between the assumption that a CARA decision maker rejects a gamble $X$ and our conclusion that she finds it dominant to accept $X$ under a heavy-tailed background risk $W$.

 The same argument shows that in the absence of limited liability, the assumption that the background risk $W$ is heavy left-tailed is necessary for our results to hold (as already discussed at the end of Section \ref{sec:main-proof}). If $W$ has thin tails, then a sufficiently risk-averse CARA utility $u$ that satisfies $\E{u(X)} < u(0)$ would continue to satisfy $\E{u(X + W)} < \E{u(W)}$, with both expectations finite. Thus thin-tailed background risks cannot make every monotone preference accept the gamble $X$. 
 
 This technical issue notwithstanding, it is worth mentioning that heavy-tailed distributions have a long history in modeling risk and have seen a number of economic applications \citep*[see, e.g.,][]{morris2019crises}.
 

 
 \paragraph{Choice Between Two Gambles.} Large background risk leads to risk-neutral behavior not only in the choice between accepting or rejecting a given small gamble, but also in the choice between two small gambles $X$ and $Y$. As we show in \S\ref{sec:choice-between-two-gambles} in the Appendix,  given two gambles $X$ and $Y$ with $\E{X} > \E{Y}$, for any background risk $W$ with sufficiently heavy tails both on the left and on the right, the resulting distribution of $X + W$ first-order stochastically dominates that of $Y + W$. 
 
 \paragraph{Second-Order Stochastic Dominance.} In \S\ref{sec:second-order-dominance} in the Appendix, we consider decision makers whose preferences are monotone with respect to second-order stochastic dominance. This is a stronger assumption that is natural in the study of risk aversion. In this setting we prove a result that is analogous to our main Theorem \ref{thm:main}: every actuarially favorable gamble is accepted when the left tail of the background risk is heavy enough. The measure of tail-heaviness is different in this case, requiring less background risk than the first-order stochastic dominance case.

\newpage
\appendix
\section{Proofs Omitted from the Main Text}

\begin{proof}[Proof of Corollary~\ref{cor:scaled-gamble}]
Suppose $W$ is heavy left-tailed, i.e.\ it satisfies $\mathrm{S}(W) > 0$. As shown by \cite*{aumann2008economic}, the riskiness index $\mathrm{R}$ is positive homogeneous, i.e.\ it satisfies $\mathrm{R}(tX) = t\mathrm{R}(X)$ for all $t>0$. Thus, the riskiness of $tX$ is lower than the exponential size of $W$ for all $t$ small enough. It then follows from Theorem~\ref{thm:main} that accepting $tX$ is dominant.
  
Now suppose the decision maker is protected by limited liability. We prove the following analogue of Theorem \ref{thm:main}, which will imply this part of Corollary~\ref{cor:scaled-gamble} as well as Corollary~\ref{cor:normal}.

\begin{proposition}\label{prop:limited-liability}
Suppose the decision maker is protected by limited liability bound $\ell$. Then under any background risk $W$, it is dominant to accept every gamble $X$ with positive expectation and riskiness
\[
\mathrm{R}(X) \leq \left(\sup_{a \geq \ell - \max[X]} \frac{g'(a)}{g(a)} \right)^{-1}.
\]
\end{proposition}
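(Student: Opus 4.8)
The plan is to adapt the proof of Theorem~\ref{thm:main}, exploiting the fact that under limited liability only the values of the background-risk distribution on a right half-line matter, so that no condition on the left tail of $W$ is needed. First I would reduce the claim to an inequality. Accepting $X$ is dominant precisely when $\P{(W+X)_\ell \leq a} \leq \P{(W)_\ell \leq a}$ for all $a \in \RR$; both sides vanish for $a < \ell$, and for $a \geq \ell$ the left side equals $\P{W+X \leq a} = \E{G(a-X)}$ (by independence of $W$ and $X$, as in \eqref{eq:eu-representation}) while the right side equals $G(a)$. So it suffices to show $\E{G(a-X)} \leq G(a)$ for every $a \geq \ell$. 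The key observation is that $\max[X] \geq \E{X} > 0$, so with $c := \ell - \max[X] < \ell$, for every $a \geq \ell$ both $a$ and $a-X$ lie in $[c,\infty)$ almost surely; hence only the restriction of $G$ to $[c,\infty)$ enters the inequality.

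Next I would set $s = \mathrm{R}(X) > 0$. The hypothesis $\mathrm{R}(X) \leq \big(\sup_{a \geq c} g'(a)/g(a)\big)^{-1}$ is equivalent (in all cases where it can hold) to $1/s \geq \sup_{a \geq c} g'(a)/g(a)$, so that $g'(a)/g(a) \leq 1/s$ for every $a \geq c$ at which $g$ is differentiable. Following the proof of Theorem~\ref{thm:main}, I would write $G(a) = \phi(\ee^{a/s})$ for an increasing function $\phi$ on $(0,\infty)$; since $g(a) = G'(a) = \phi'(\ee^{a/s}) \cdot \tfrac{1}{s}\ee^{a/s}$, the function $\phi$ is concave on $[\ee^{c/s},\infty)$ if and only if $g(a)\ee^{-a/s}$ is decreasing on $[c,\infty)$, and the latter follows from $g'(a)/g(a) \leq 1/s$ there. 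The piecewise-differentiability bookkeeping --- invoking left and right derivatives at the finitely many exceptional points --- is identical to that in \S\ref{sec:main-proof}.

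Finally I would apply Jensen's inequality. Fix $a \geq \ell$ and set $Y := \ee^{(a-X)/s}$. Because $X$ is bounded and $a - \max[X] \geq c$, the random variable $Y$ takes values in $[\ee^{c/s},\infty)$, and so does $\E{Y}$; hence Jensen's inequality applies on the interval where $\phi$ was shown to be concave, giving $\E{G(a-X)} = \E{\phi(Y)} \leq \phi(\E{Y})$. Since $s = \mathrm{R}(X)$, the defining equation $\E{\ee^{-X/\mathrm{R}(X)}} = 1$ yields $\E{Y} = \ee^{a/s}\,\E{\ee^{-X/s}} = \ee^{a/s}$, so $\phi(\E{Y}) = G(a)$ and the required inequality follows for all $a \geq \ell$, proving that accepting $X$ is dominant. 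I expect the only delicate point to be this Jensen step: one must verify that both the range and the mean of $Y$ remain inside the half-line on which $\phi$ is concave. This is precisely where the truncation point $c = \ell - \max[X]$, rather than $-\infty$, does the work, and it is the reason no assumption on the left tail of $W$ appears in the statement.
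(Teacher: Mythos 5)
Your proposal is correct and follows essentially the same route as the paper: reduce dominance under limited liability to $\E{G(a-X)}\leq G(a)$ for $a\geq\ell$, observe that only wealth levels above $\ell-\max[X]$ enter, and show $G$ is a concave transform of the CARA utility $\ee^{a/s}$ on that half-line, which the paper phrases as an Arrow--Pratt comparison and you spell out via the explicit Jensen step with $s=\mathrm{R}(X)$. The extra care you take in checking that the support and mean of $\ee^{(a-X)/s}$ stay in the region of concavity is exactly the implicit content of the paper's "only involves wealth levels above $\ell-\max[X]$" remark, so there is no substantive difference.
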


Compared to the definition of $\mathrm{S}(W)$, the supremum on the right-hand side above only considers points above $\ell - \max[X]$. We now show that Proposition \ref{prop:limited-liability} implies the limited liability case of Corollary \ref{cor:scaled-gamble}. Since we assumed $g$ to be strictly positive and piece-wise continuously differentiable, the ratio $g'(a)/g(a)$ is bounded on every compact interval. Moreover, as $g$ is eventually decreasing, this ratio is bounded for points $a \geq \ell - 1$. Thus, for $t > 0$ small enough, $\mathrm{R}(tX) = t \cdot \mathrm{R}(X)$ is close to zero, while 
\[
\left(\sup_{a \geq \ell - \max[tX]} \frac{g'(a)}{g(a)} \right)^{-1} \geq \left(\sup_{a \geq \ell - 1} \frac{g'(a)}{g(a)} \right)^{-1} 
\]
is bounded away from zero. Applying Proposition \ref{prop:limited-liability} to the gamble $tX$ yields Corollary \ref{cor:scaled-gamble}. 
\end{proof}
 
\begin{proof}[Proof of Proposition~\ref{prop:limited-liability}]
We follow the proof of Theorem \ref{thm:main} in \S\ref{sec:main-proof}, and explain the necessary modifications. Under limited liability, accepting the gamble $X$ is dominant if
\[
\P{W + X \leq a}\leq \P{W \leq a} = G(a) \, \text{~for all~} a \geq \ell \,.
\]
Thus, instead of \eqref{eq:eu-representation}, we only need to check
\begin{equation}\label{eq:eu-representation-limited-liability}
    \E{ G(a-X) } \leq G(a) \, \text{~for all~} a \geq \ell \,.
\end{equation}
Let $s$ denote $\left(\sup_{a \geq \ell - \max[X]} \frac{g'(a)}{g(a)} \right)^{-1}$, and suppose without loss that $s > 0$. Then $\mathrm{R}(X) \leq s$ implies $\E{\ee^{-X/s}} \leq 1$. So the CARA decision maker with utility function $U(a) = \ee^{a/s}$ would reject the gamble $-X$. 

On the other hand, the definition of $s$ implies that the function $G$ has weakly higher Arrow-Pratt index than $U$ on the interval $[\ell-\max[X], \infty)$, so $G$ is a concave transformation of $U$ \emph{on this interval}. Since the comparison between $a$ and $a-X$ in \eqref{eq:eu-representation-limited-liability} only involves wealth levels that are above $\ell-\max[X]$, we deduce that rejection of $-X$ by the utility function $U$ implies rejection by $G$. Hence the proposition.
\end{proof}
 
\begin{proof}[Proof of Corollary~\ref{cor:normal}]
By Proposition \ref{prop:limited-liability}, accepting $X$ is dominant under limited liability if 
\[
\mathrm{R}(X) \leq \left(\sup_{a \geq \ell - \max[X]} \frac{g'(a)}{g(a)} \right)^{-1},
\]
where $g(a) = \frac{1}{\sqrt{2\pi}\sigma}\ee^{-\frac{(a-\mu)^2}{2\sigma^2}}$ denotes the density of the normal background risk. It is well known that $g$ is log-concave, so that $g'(a)/g(a)$ is decreasing in $a$. Thus 
\[
\sup_{a \geq \ell - \max[X]} \frac{g'(a)}{g(a)} = \frac{g'(\ell-\max[X])}{g(\ell-\max[X])} = \frac{\mu - \ell + \max[X]}{\sigma^2}.
\]
Thus the normal background risk makes $X$ dominant whenever $\mathrm{R}(X) \leq \frac{\sigma^2}{\mu - \ell + \max[X]}$, or equivalently $\sigma \geq \sqrt{\mathrm{R}(X)(\mu - \ell + \max[X])}$. 
\end{proof}

\section{On Limited Liability and Risk Aversion}\label{sec:limited-liability-risk-aversion}

In this section we consider an expected utility decision maker endowed with a utility function $u\colon \RR \to \RR$ that is differentiable, strictly increasing, concave, and bounded from above. The decision maker is protected by limited liability bound $\ell$. We assume that the background risk $W$ satisfies $\P{W = \ell} = 0$ and $\P{W > \ell} > 0$ (these conditions are weaker than the assumption described in \S\ref{sec:model}). We have the following result:

\begin{proposition}\label{prop:limited-liability-risk-loving}
Let $X$ be any bounded gamble with \emph{negative} expectation. Then under the above conditions, it holds that for all $t > 0$ small enough, 
\[
    \E{u((W + tX)_\ell)} < \E{u(W_\ell)}\,.
\]
\end{proposition}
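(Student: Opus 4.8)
The plan is to reduce the statement to a claim about the behavior of $\E{u((W+tX)_\ell)}$ near $t=0$, and to show that its right derivative at $t=0$ is strictly negative. Write $\varphi(t) = \E{u((W+tX)_\ell)}$ for $t$ in a neighborhood of $0$. Since $X$ is bounded, say $|X|\le M$, and $u$ is differentiable, strictly increasing, concave and bounded above, the integrand $u((w+tx)_\ell)$ is Lipschitz in $t$ uniformly over $(w,x)$ with $|x|\le M$ (using that $u'$ is bounded on $[\ell,\infty)$ because $u$ is concave and bounded above, hence $u'\le u'(\ell)<\infty$ there, and $u$ is constant in the truncated region below $\ell$). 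This uniform Lipschitz bound gives an integrable dominating function, so by dominated convergence $\varphi$ is differentiable from the right at $0$ and we may differentiate under the expectation.

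The key computation is then to identify the one-sided derivative of $t\mapsto u((w+tx)_\ell)$ at $t=0$ for a fixed realization $w$ of $W$ and $x$ of $X$. There are two cases. If $w>\ell$, then for small $t$ we have $w+tx>\ell$, so $(w+tx)_\ell = w+tx$ and the derivative at $t=0$ is $u'(w)\,x$. If $w<\ell$, then for small $t$ we have $w+tx<\ell$, so $(w+tx)_\ell=\ell$ is constant and the derivative is $0$. The event $\{W=\ell\}$ has probability zero by assumption and contributes nothing. Therefore, by the dominated-convergence/differentiation-under-the-integral argument,
\[
\varphi'(0^+) = \E{u'(W)\,X\,\ind{W>\ell}}.
\]
Since $X$ is independent of $W$, this factors as $\E{u'(W)\ind{W>\ell}}\cdot\E{X}$. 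Now $\E{u'(W)\ind{W>\ell}}$ is strictly positive: $u'>0$ everywhere since $u$ is strictly increasing, and $\P{W>\ell}>0$ by hypothesis, so the integrand is strictly positive on a set of positive measure. Combined with $\E{X}<0$, we conclude $\varphi'(0^+)<0$. Hence there exists $\bar t>0$ such that $\varphi(t)<\varphi(0)=\E{u(W_\ell)}$ for all $t\in(0,\bar t)$, which is exactly the claim.

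The main obstacle is the justification of differentiating under the expectation at a boundary (one-sided) point, given that the map $t\mapsto u((w+tx)_\ell)$ is only piecewise differentiable (it has a kink exactly when $w+tx=\ell$). The clean way around this is to avoid pointwise differentiability arguments and instead work directly with difference quotients: for each fixed $(w,x)$ with $w\neq\ell$, the difference quotient $\big(u((w+tx)_\ell)-u(w_\ell)\big)/t$ converges as $t\searrow 0$ to the value computed above, and it is bounded in absolute value by $M\,u'(\ell)$ uniformly (again using concavity and that truncation only decreases variation). Dominated convergence then yields $\lim_{t\searrow 0}(\varphi(t)-\varphi(0))/t = \E{u'(W)X\ind{W>\ell}}$ directly, without needing $\varphi$ to be differentiable in any stronger sense. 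One should also double-check the edge case where $u'(w)$ could be unbounded as $w\to\infty$; but since $u$ is concave, $u'$ is nonincreasing, so $u'(w)\le u'(\ell)$ for all $w\ge\ell$, and the truncation confines all relevant arguments to $[\ell,\infty)$, so no such issue arises. The assumption $\P{W=\ell}=0$ is what lets us ignore the kink set entirely, and the assumption $\P{W>\ell}>0$ is what makes the derivative strictly (rather than weakly) negative.
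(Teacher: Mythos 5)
Your proof is correct: the uniform bound $M\,u'(\ell)$ on the difference quotients (from concavity of $u$ and the $1$-Lipschitzness of truncation) legitimately allows dominated convergence, the pointwise limits on $\{W>\ell\}$ and $\{W<\ell\}$ are as you state, and $\P{W=\ell}=0$ disposes of the kink set, giving $\varphi'(0^+)=\E{u'(W)\ind{W>\ell}}\cdot\E{X}<0$. This is the same first-order idea as the paper, but executed differently. The paper never computes an exact derivative: it splits the increment into a term on $\{W>\ell\}$, bounded above by the concavity (supergradient) inequality $u(W+tX)-u(W)\le u'(W)\,tX$, plus a boundary term $\left[\ind{W+tX>\ell}-\ind{W>\ell}\right]u(W+tX)$, which it shows is $o(t)$ by noting that on this event $|W-\ell|\le tL$ and $|W+tX-\ell|\le tL$, then using $\P{W=\ell}=0$ and differentiability of $u$ at $\ell$. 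So the paper gets only an upper bound on the $\limsup$ of the difference quotient, but with explicit quantitative control of the truncation term and no appeal to differentiation under the expectation; your route is shorter and delivers the exact one-sided derivative, at the cost of invoking dominated convergence with the Lipschitz dominating constant. One small repair: $u'>0$ everywhere does not follow from strict monotonicity alone (that gives only $u'\ge 0$); as in the paper, you should cite strict monotonicity \emph{together with} concavity, which under the proposition's hypotheses does yield $u'>0$ and hence $\E{u'(W)\ind{W>\ell}}>0$.
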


Thus, it is optimal for the decision maker to reject any small gamble $tX$ with negative expectation. This result clarifies that for small gambles, limited liability does not lead to risk-loving behavior (under general assumptions on preferences). The result has an additional implication. In the absence of limited liability, it is known that no background risk can make it dominant to accept a gamble with negative expectation: a necessary condition for $W + X$ to dominate $W$ in first-order stochastic dominance is that $X$ has positive expectation.\footnote{This is clear when $W$ has finite expectation, since then  $\E{X} = \E{W+X}-\E{W} > 0$. The general case was shown by \cite*{ pomatto2018stochastic}.} The result here shows that the same conclusion holds under limited liability, for small gambles.

 
 
\begin{proof}[Proof of Proposition \ref{prop:limited-liability-risk-loving}]
Without loss of generality, we set the liability bound to be $\ell = 0$. Let $W$ satisfy $\P{W = 0} = 0$ and $\P{W > 0} > 0$. Under our assumptions on $u$, the expectations $\E{u( (W+tX)^+ )}$ and $\E{u(W^+ )}$ are finite, where we use $z^{+}$ to denote $(z)_0 = \max\{z, 0\}$. Normalizing $u(0) = 0$, we can compute the payoff difference between accepting and rejecting $tX$ as follows:
\begin{align*}
    \lefteqn{\E{u( (W+tX)^+ )-u(W^+ )}}\\ 
    &= \E{\mathbf{1}_{W+tX > 0} \cdot u( W+tX )-\mathbf{1}_{W > 0} \cdot u(W)} \\
    &=\E{\mathbf{1}_{W > 0}\cdot [u( W+tX )-  u(W)] } + \E{ [\mathbf{1}_{W+tX > 0} - \mathbf{1}_{W > 0}] \cdot u( W+tX )}\\
    &\leq\E{\mathbf{1}_{W > 0} \cdot [u( W+tX )-  u(W)]} + \E{ \vert \mathbf{1}_{W+tX > 0} - \mathbf{1}_{W > 0}\vert \cdot \vert u(W + tX) \vert} \,.
\end{align*}
By the concavity of $u$, we have
\[
    u( W + t X )-  u(W) \leq u'(W) \cdot tX \,,
\]
where we interpret the derivative $u'$ as the left-derivative in this proof. Thus, we get that
\[
    \E{\mathbf{1}_{W > 0} \cdot [u( W+tX )-  u(W)]} \leq \E{\mathbf{1}_{W > 0}u'(W) tX} = t \cdot \E{\mathbf{1}_{W > 0}u'(W)} \cdot \E{X} \,.
\]
Using the assumption that $W > 0$ with positive probability, $u' > 0$ (which follows from strict monotonicity and concavity) and $\E{X} < 0$, we obtain
\[
    \limsup_{t \to 0} \frac{1}{t} \E{\mathbf{1}_{W > 0} \cdot [u( W+tX )-  u(W)]} \leq \E{\mathbf{1}_{W > 0}u'(W)} \cdot \E{X} < 0 \,.
\]

Next, suppose $\vert X \vert \leq L$. Then $\mathbf{1}_{W+tX > 0} \neq \mathbf{1}_{W > 0}$ if and only if $W + tX > 0 \geq W$ or $W + tX \leq 0 < W$. In both cases we obtain $\vert W \vert \leq tL$ and $\vert W + tX \vert \leq tL$. Thus 
\[
    \E{ \vert \mathbf{1}_{W+tX > 0} - \mathbf{1}_{W > 0}\vert \cdot \vert u(W + tX) \vert} \leq \P{\vert W \vert \leq tL} \cdot \sup_{\vert a \vert \leq tL} \vert u(a) \vert.
\]
Taking the limit as $t \to 0$ yields
\[
    \limsup_{t\to 0} \frac{1}{t}\E{ \vert \mathbf{1}_{W+tX > 0} - \mathbf{1}_{W > 0}\vert \cdot \vert u(W + tX) \vert} \leq \limsup_{t\to 0 } \P{\vert W \vert \leq tL} \cdot \frac{1}{t} \sup_{\vert a \vert \leq tL} \vert u(a) \vert = 0\,,
\]
where we used the fact that $\lim_{t\to 0 } \P{\vert W \vert \leq tL} = \P{W = 0} = 0$ and $\lim_{t \to 0} \frac{1}{t} \sup_{\vert a \vert \leq tL} \vert u(a) \vert = \lim_{t \to 0} \frac{1}{t} \cdot -u(-tL) = L\cdot u'(0) < \infty$.

Combining the above estimates, we conclude that
\[
\limsup_{t \to 0} \frac{1}{t} \E{u( (W+tX)^+ )-u(W^+ )} < 0.
\]
Hence for sufficiently small $t$, the decision maker would reject the scaled gamble $tX$ under limited liability.
\end{proof}

We conclude with two observations. We assumed that $W$ takes values above $\ell$ with positive probability. Clearly, if $W \leq \ell$ with certainty, then it is dominant to accept every gamble under limited liability. It is also important for the result that $\E{X}$ is strictly negative. If the gamble $X$ has zero expectation, then it can be dominant to accept $tX$ for every small $t$, so long as the background risk distribution $G$ is concave on the interval $(\ell - \epsilon, \infty)$ for some positive $\epsilon$ (since this guarantees \eqref{eq:eu-representation-limited-liability} for $tX$).

\section{Second-Order Dominance}\label{sec:second-order-dominance}

Our analysis can be extended to the smaller class of risk-averse preferences. We say that accepting $X$ is \textit{dominant for a risk-averse decision maker} if $W + X$ dominates $W$ with respect to second-order stochastic dominance. We also introduce a modified version of the exponential size: for any background risk $W$ with c.d.f.\ G, let
\[
    \mathrm{S}_2(W) =\left(\sup_{a\in \mathbb{R}}\frac{g(a)}{G(a)}\right)^{-1}.
\]
It is easy to show that $\mathrm{S}_2(W) \geq \mathrm{S}(W)$.\footnote{If $\mathrm{S}(W) = 0$ then the result is trivial. If instead $\mathrm{S}(W) > 0$, then we have the inequality $g(x) \geq g(y) \cdot \ee^{\frac{x-y}{\mathrm{S}(W)}}$. Note that $G(y) = \int_{-\infty}^y g(x) \, \dd x \to 0$ as $y \to -\infty$. Using the previous inequality, we deduce that $g(y) \to 0$ as $y \to -\infty$. Hence, for each $a$, it holds that 
\[
\frac{g(a)}{G(a)} = \frac{\int_{-\infty}^{a} g'(x) \, \dd x}{\int_{-\infty}^{a} g(x) \, \dd x} \leq \sup_{x} \frac{g'(x)}{g(x)} = \frac{1}{\mathrm{S}(W)}.
\]
As a result, $\mathrm{S}_2(W) \geq \mathrm{S}(W)$ again holds.} 

\begin{theorem}\label{thm:second-order}
Under any given background risk $W$ with finite expectation, it is dominant for a risk-averse decision maker to accept every gamble $X$ with positive expectation and riskiness $\mathrm{R}(X) \leq \mathrm{S}_2(W)$.
\end{theorem}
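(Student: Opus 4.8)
The plan is to imitate the proof of Theorem~\ref{thm:main} after replacing the c.d.f.\ $G$ of the background risk by its antiderivative. Write $F_Y$ for the c.d.f.\ of a random variable $Y$, and recall the standard characterization: for random variables with finite mean, $Y$ second-order stochastically dominates $Z$ if and only if $\E{(a-Y)^{+}}\le \E{(a-Z)^{+}}$ for every $a\in\RR$, and $\E{(a-Y)^{+}}=\int_{-\infty}^{a}F_Y(t)\,\dd t$. Define
\[
  H(a)=\int_{-\infty}^{a}G(t)\,\dd t=\E{(a-W)^{+}},
\]
which is finite because $W$ has finite expectation; $H$ is increasing and convex, and since $g$ is continuous it is $C^{2}$ with $H'=G$ and $H''=g$. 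Independence of $X$ and $W$ together with Tonelli's theorem (the integrand is nonnegative) gives $\E{(a-(W+X))^{+}}=\E{H(a-X)}$. Hence, exactly as \eqref{eq:eu-representation} characterizes first-order dominance, accepting $X$ is dominant for a risk-averse decision maker if and only if $\E{H(a-X)}\le H(a)$ for all $a\in\RR$.

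From this point I would run the Arrow--Pratt argument of \S\ref{sec:main-proof} almost verbatim, with $H$ in place of $G$ and with $s:=\mathrm{S}_2(W)$ (the claim is vacuous when $s=0$, so assume $s>0$). By \eqref{eq:CARA-conditions} the hypothesis $\mathrm{R}(X)\le s$ is equivalent to $\E{\ee^{-X/s}}\le 1$, i.e.\ to the risk-loving CARA decision maker with utility $U(a)=\ee^{a/s}$ rejecting $-X$ at every wealth level. Writing $H(a)=\psi\big(\ee^{a/s}\big)$ for the increasing function $\psi(u)=H(s\ln u)$ on $(0,\infty)$, one computes $\psi'\big(\ee^{a/s}\big)=s\,G(a)\,\ee^{-a/s}$, so $\psi$ is concave if and only if $a\mapsto G(a)\,\ee^{-a/s}$ is decreasing; its derivative equals $\ee^{-a/s}\big(g(a)-G(a)/s\big)$, which is nonpositive precisely when $g(a)/G(a)\le 1/s=\sup_{a}g(a)/G(a)$ --- true by the definition of $\mathrm{S}_2(W)$. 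Therefore $\psi$ is concave, $H=\psi\circ U$ is a concave increasing transformation of $U$, and Jensen's inequality plus monotonicity of $\psi$ turn rejection under $U$ into rejection under $H$: $\E{H(a-X)}=\E{\psi(U(a-X))}\le\psi\big(\E{U(a-X)}\big)\le\psi(U(a))=H(a)$, which is exactly what is needed. (The same computation shows the bound $\mathrm{R}(X)\le\mathrm{S}_2(W)$ is tight, paralleling the remark after the proof of Theorem~\ref{thm:main}.)

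The only genuinely new ingredients relative to Theorem~\ref{thm:main} are (i) the reduction of second-order dominance to a first-order-type inequality for the once-integrated c.d.f.\ $H$, and (ii) the appearance of the reverse hazard rate $g/G$ in place of $g'/g$, which is exactly why the relevant size index is $\mathrm{S}_2$ rather than $\mathrm{S}$. I expect the step requiring the most care to be the measure-theoretic bookkeeping --- that $H(a)=\E{(a-W)^{+}}$ is finite, that $\E{(a-(W+X))^{+}}=\E{H(a-X)}$, and that the SOSD characterization applies --- though all of this is routine given that $W$ has finite expectation and $X$ is bounded. It is also worth stressing that, because $H$ is convex (the hypothetical decision maker with utility $H$ is ``risk-loving''), one cannot compare second derivatives naively: as in Theorem~\ref{thm:main}, the argument must go through the concavity of the transformation $\psi$, not of $H$ itself.
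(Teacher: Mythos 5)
Your proposal is correct and follows essentially the same route as the paper: reduce second-order dominance via Tonelli to the inequality $\E{u_G(a-X)}\le u_G(a)$ with $u_G(a)=\int_{-\infty}^a G(t)\,\dd t$ (your $H$), then show this utility is globally more risk-averse than the CARA utility $\ee^{a/s}$ using the reverse hazard rate bound $g/G\le 1/s$ from the definition of $\mathrm{S}_2$. The only cosmetic difference is that you spell out the concave transformation $\psi$ and Jensen's inequality explicitly, whereas the paper invokes the Arrow--Pratt comparison directly (the same device it already made explicit in the proof of Theorem~\ref{thm:main}).
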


\begin{proof}[Proof of Theorem~\ref{thm:second-order}]
Let $s = \mathrm{S}_2(W)$ and without loss focus on $s > 0$. By a well-known characterization of second-order stochastic dominance, it is dominant to accept $X$ if and only if
\begin{equation}\label{eq:proof-second-order}
    \int_{-\infty}^a \P{W+X \leq t} \,\dd t \leq \int_{-\infty}^a \P{W \leq t} \,\dd t \, \text{~for all~} a\in\RR.
\end{equation}
That the integrals in \eqref{eq:proof-second-order} are finite follows from the fact that $W$ and $W + X$ have finite expectations. By Tonelli's Theorem, the quantity $\int_{-\infty}^a \P{W+X \leq t} \,\dd t$ is equal to
\[
    \int_{-\infty}^a \E{G(t-X)} \,\dd t = \E{\int_{-\infty}^a G(t-X)\,\dd t} = \E{\int_{-\infty}^{a-X} G(t)\,\dd t}.
\]
Hence, it is second-order dominant to accept a gamble $X$ if and only if for every $a \in \RR$
\[
    \E{u_G(a-X)} \leq u_G(a),
\]
where $u_G(a) = \int_{-\infty}^{a} G(t)\,\dd t$. Therefore, as in the proof of Theorem~\ref{thm:main}, we obtain that accepting $X$ is dominant if 
\begin{equation}\label{eq:diffidence-ssd}
    \E{\ee^{-\frac{1}{s} X }} \leq 1 \implies \E{ u_G(a - X) } \leq u_G(a) \, \text{~for all~} a\in\RR.
\end{equation}
Equation \eqref{eq:diffidence-ssd} holds whenever $u_G$ is globally more risk-averse than the CARA utility function $U(a) = \ee^{\frac{a}{s}}$. The Arrow-Pratt index for $u_G$ is $-g(a)/G(a)$, which by assumption is weakly larger than $-1/s$, the Arrow-Pratt index for $U$. Thus $u_G$ is indeed more risk-averse than $U$, concluding the proof.
\end{proof}

\section{Choice Between Two Gambles}\label{sec:choice-between-two-gambles}

In this section, we extend the analysis to situations where the decision maker faces a choice between two bounded gambles $X$ and $Y$ that have distinct distributions $F_X$ and $F_Y$. We say it is \emph{dominant to choose $X$ over $Y$} under background risk $W$, if $W + X$ first-order stochastically dominates $W + Y$. A result similar to Theorem~\ref{thm:main} can be obtained if we consider background risks with heavy tails both on the left and on the right. For this we define the \textit{two-sided exponential size}
\[
\mathrm{S}^*(W) = \left(\sup_{a} \Bigg| \frac{g'(a)}{g(a)} \Bigg| \right)^{-1},
\]
which is equal to $\min\{ \mathrm{S}(W), \mathrm{S}(-W) \}$. Then we have:
\begin{theorem}\label{thm:two-gambles-both-tails}
The following are equivalent:
\begin{itemize}
    \item[(i)] $\mathbb{E}[X] > \mathbb{E}[Y]$; 
    \item[(ii)] there exists $s \in (0, \infty)$ such that under any background risk $W$ with $\mathrm{S}^*(W) \geq s$, choosing $X$ over $Y$ is dominant. 
\end{itemize}
\end{theorem}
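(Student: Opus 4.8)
The plan is to prove the equivalence by establishing both directions, with the substantive content lying in (i) $\Rightarrow$ (ii). For the easy direction (ii) $\Rightarrow$ (i), I would argue by contraposition: if $\EE[X] \leq \EE[Y]$, then no background risk can make $W+X$ first-order stochastically dominate $W+Y$. When $W$ has finite expectation this is immediate since first-order stochastic dominance implies $\EE[W+X] \geq \EE[W+Y]$, hence $\EE[X] \geq \EE[Y]$; combined with $\EE[X] \leq \EE[Y]$ this forces equality of means, but distinct distributions with equal means cannot be ranked by first-order stochastic dominance. For general $W$ one can invoke the argument of \cite{pomatto2018stochastic} cited in the paper, or simply restrict attention to the $W$ with $\mathrm{S}^*(W) \geq s$ used in (ii), which automatically have finite expectation since heavy-tailedness on both sides still means the density decays at least exponentially fast in both tails.

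For the main direction (i) $\Rightarrow$ (ii), I would reduce the two-gamble comparison to the accept/reject framework of Theorem~\ref{thm:main} by considering the difference. The key reformulation: $W+X$ first-order stochastically dominates $W+Y$ if and only if $\PP[W+X \leq a] \leq \PP[W+Y \leq a]$ for all $a$, which by independence becomes $\EE[G(a-X)] \leq \EE[G(a-Y)]$ for all $a$, where $G$ is the c.d.f.\ of $W$. As in the proof of Theorem~\ref{thm:main}, this says a hypothetical expected-utility agent with utility $G$ weakly prefers the gamble $-Y$ to the gamble $-X$ at every wealth level. The natural sufficient condition, paralleling the CARA comparison in the main proof, is that $G$ be a concave transformation of the exponential $U(a) = \ee^{a/s}$ — which holds exactly when $\mathrm{S}(W) \geq s$ — together with the requirement that the CARA agent with utility $U$ prefers $-Y$ to $-X$, i.e.\ $\EE[\ee^{-X/s}] \leq \EE[\ee^{-Y/s}]$. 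Wait — this last inequality need not hold for all small $s$ when only $\EE[X] > \EE[Y]$; indeed $\EE[\ee^{-X/s}] = 1 - \EE[X]/s + O(1/s^2)$, so for $s$ large the sign of $\EE[\ee^{-X/s}] - \EE[\ee^{-Y/s}]$ is governed by $-(\EE[X]-\EE[Y])/s < 0$, giving the desired inequality. So one picks $s$ large enough that this first-order term dominates, using boundedness of $X$ and $Y$ to control the remainder uniformly.

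I would then assemble the argument as follows: given $\EE[X] > \EE[Y]$, choose $s$ large enough that (a) $\EE[\ee^{-X/s}] \leq \EE[\ee^{-Y/s}]$, which follows from the Taylor expansion above plus uniform bounds on higher moments since $X,Y$ are bounded. Then for any $W$ with $\mathrm{S}^*(W) \geq s$ — in particular $\mathrm{S}(W) \geq s$ — the c.d.f.\ $G$ satisfies $g'(a)/g(a) \leq 1/s$ everywhere, so $G$ is more risk-averse than $U(a)=\ee^{a/s}$ in the Arrow-Pratt sense, exactly as in \S\ref{sec:main-proof}; handling the merely piece-wise differentiable case by writing $G(a) = \phi(\ee^{a/s})$ and checking $\phi$ is concave via monotonicity of $g(a)\ee^{-a/s}$. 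Since $U$ prefers $-Y$ to $-X$ and $G$ is a concave transformation of $U$, standard Arrow-Pratt comparative statics give $\EE[G(a-X)] \leq \EE[G(a-Y)]$ for all $a$, which is precisely first-order stochastic dominance of $W+X$ over $W+Y$. The role of $\mathrm{S}^*(W)$ rather than just $\mathrm{S}(W)$: the finite-expectation requirement needed to make the comparison meaningful (and to run the easy direction) is guaranteed by two-sided heavy-tailedness, and it ensures the integrals and expectations involved are all finite.

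The main obstacle I anticipate is the comparative-statics step — verifying that "$U$ prefers $-Y$ to $-X$" plus "$G = \phi \circ U$ with $\phi$ concave" implies "$G$ prefers $-Y$ to $-X$ at every wealth level." This is subtler than the single-gamble case of Theorem~\ref{thm:main} because there one compares a gamble to a sure thing (exploiting $\EE[U(-X)] \leq U(0)$ and Jensen applied to the concave $\phi$), whereas here both sides are genuine lotteries, so a naive application of Jensen does not directly work. The clean way around this is to go back to the density-level characterization: $\EE[G(a-X)] \leq \EE[G(a-Y)]$ for all $a$ is equivalent, after integrating by parts (or differentiating in $a$), to a statement about $g$, and since $g(a) = \frac{1}{s}\phi'(\ee^{a/s})\ee^{a/s}$ with $\phi'$ decreasing, one can write $g$ as a mixture/limit of exponential densities $\ee^{a/s'}$-type kernels with $s' \leq s$ and reduce to the CARA case where the inequality $\EE[\ee^{-X/s'}] \leq \EE[\ee^{-Y/s'}]$ must be checked — and here I would need to ensure this CARA inequality holds not just at the chosen $s$ but for all $s' \in (0, s]$, which again follows because $h(r) := \EE[\ee^{-rX}] - \EE[\ee^{-rY}]$ satisfies $h(0)=0$, $h'(0) = \EE[Y] - \EE[X] < 0$, and $h$ is convex-ish/controlled on $[0, 1/s]$ once $s$ is large, so $h \leq 0$ throughout. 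Getting this monotonicity/uniformity clean is the delicate bookkeeping; once it is in place, the rest follows the template of \S\ref{sec:main-proof} essentially verbatim.
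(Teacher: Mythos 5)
Your direction (ii) $\Rightarrow$ (i) is fine and essentially matches the paper (modulo the deconvolution step needed for strictness, which the paper handles via moment generating functions in a footnote). The gap is in (i) $\Rightarrow$ (ii). You correctly observe that the Arrow--Pratt step from Theorem \ref{thm:main} does not transfer: ``$G$ is a concave transform of $U$'' preserves the ranking of a gamble against a sure outcome, not the ranking between two genuine gambles, so the single CARA inequality $\mathbb{E}[\ee^{-X/s}] \leq \mathbb{E}[\ee^{-Y/s}]$ plus concavity of $\phi$ does not deliver $\mathbb{E}[G(a-X)] \leq \mathbb{E}[G(a-Y)]$. But the repair you sketch does not close the gap. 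An increasing concave $\phi$ decomposes into \emph{capped} kernels $x \mapsto \min\{x,t\}$, i.e.\ $a \mapsto \min\{\ee^{a/s}, t\}$, not into pure CARA utilities $\ee^{a/s'}$ with varying $s'$; and even granting your reduction, the family of inequalities you would need, $\mathbb{E}[\ee^{-X/s'}] \leq \mathbb{E}[\ee^{-Y/s'}]$ for \emph{all} $s' \in (0,s]$, is false in general under $\mathbb{E}[X] > \mathbb{E}[Y]$ alone: as $s' \to 0$ the sign of $\mathbb{E}[\ee^{-X/s'}] - \mathbb{E}[\ee^{-Y/s'}]$ is governed by $\min[X]$ versus $\min[Y]$, so it fails whenever $\min[X] < \min[Y]$ (e.g.\ $X$ equal to $+11$ or $-10$ with equal probability against $Y \equiv 0$). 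Your own Taylor-expansion check only controls $r = 1/s' \in [0, 1/s]$, i.e.\ $s' \geq s$, which is not the range your mixture argument would require.

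More fundamentally, your argument uses only the left-tail half of the hypothesis ($\mathrm{S}(W) \geq s$), treating the right-tail bound as a mere integrability device. Under that weaker hypothesis the statement is false: the paper's Theorem \ref{thm:two-gambles} shows that with only a heavy left tail, dominance for some $s$ is equivalent to strong dominance in the convex order, which is strictly stronger than $\mathbb{E}[X] > \mathbb{E}[Y]$. Concretely, take $X \equiv 1$ and $Y = \pm 10$ with equal probability, and let $W$ have a heavy left tail but a thin right tail; then for large $a$ one has $\P{W+Y > a} \geq \tfrac12 \P{W > a-10} > \P{W > a-1} = \P{W+X > a}$, so $W+X$ never first-order dominates $W+Y$, no matter how large $\mathrm{S}(W)$ is. Any correct proof must therefore use the bound $g'/g \geq -1/s$ substantively. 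The paper does so directly: writing $h = \ln g$, the two-sided condition gives $\vert h(a-z) - h(a) \vert \leq M/s$ uniformly over $z$ in the common support $[-M,M]$, so after dividing by $g(a)$ the dominance condition becomes $\int_{-M}^{M} \ee^{h(a-z)-h(a)}\,(F_Y(z)-F_X(z))\,\dd z \geq 0$, and as $s \to \infty$ this integral converges uniformly in $a$ to $\int_{-M}^{M}(F_Y(z)-F_X(z))\,\dd z = \mathbb{E}[X]-\mathbb{E}[Y] > 0$. That uniform-perturbation argument, rather than an Arrow--Pratt comparison, is the missing idea.
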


\begin{proof}[Proof of Theorem \ref{thm:two-gambles-both-tails}]
We first show (ii) implies (i). Given any finite $s$, we can choose $W$ to have a Laplace distribution with sufficiently large variance. Then $W$ satisfies $\mathrm{S}^*(W) \geq s$, and by assumption $W + X$ must first-order stochastically dominate $W + Y$. Since such a $W$ has finite expectation, we have $\mathbb{E}[W+X] \geq \mathbb{E}[W+Y]$, which implies $\mathbb{E}[X] \geq \mathbb{E}[Y]$. The inequality is in fact strict, for otherwise $W + X$ would have the same distribution as $W + Y$, and $X$ would have the same distribution as $Y$.\footnote{The last claim can be proved by considering the moment generating function in a neighborhood of $0$. Since $\mathbb{E}[\ee^{tW}]$ is finite for $t$ close to $0$, both $\mathbb{E}[\ee^{t(W+X)}]$ and $\mathbb{E}[\ee^{t(W+Y)}]$ are finite and are equal. It follows that $\mathbb{E}[\ee^{tX}] = \mathbb{E}[\ee^{tY}]$ for $t$ in a neighborhood of $0$, which implies $X$ and $Y$ have the same distribution.}

To prove (i) implies (ii), we assume $\mathbb{E}[X] > \mathbb{E}[Y]$ and take $s$ to be a large positive number (to be determined later). Consider any background risk $W$ with $\mathrm{S}^*(W) \geq s$, i.e.\ the density $g$ satisfies $\vert g'(a) / g(a) \vert \leq 1/s$ for all $a$. Let $h(a) = \ln g(a)$, then we can rewrite the condition as 
\[
\vert h'(a) \vert \leq \frac{1}{s} \, \text{~for all~} a \in \mathbb{R}.
\]
We now use this to show $\P{W + Y \leq a} \geq \P{W + X \leq a}$ for all $a$. Since $W$ is independent from both $X$ and $Y$, this comparison is equivalent to 
\[
\int_{-M}^{M} g(a - z) \cdot F_Y(z) \, \dd z \geq \int_{-M}^{M} g(a - z) \cdot F_X(z) \, \dd z,
\]
where $M$ is a large number such that $[-M, M]$ contains the support of both $X$ and $Y$. This in turn is equivalent to
\[
\int_{-M}^{M} \ee^{h(a - z)} \cdot (F_Y(z) - F_X(z))\, \dd z \geq 0.
\]
Dividing both sides by $\ee^{h(a)}$, we just need to show that for all $a$ 
\[
\int_{-M}^{M} \ee^{h(a - z) - h(a)} \cdot (F_Y(z) - F_X(z))\, \dd z \geq 0.
\]
Observe that since $\vert h' \vert$ is bounded above by $1/s$, we have $\vert h(a - z) - h(a) \vert \leq M/s$ for all $a \in \mathbb{R}$ and all $z \in [-M, M]$. Thus if $s$ is chosen to be sufficiently large, then the above integral converges, uniformly across $a$, to the integral $\int_{-M}^{M} (F_Y(z) - F_X(z))\, \dd z$. Since this limit integral evaluates to $\mathbb{E}[X] - \mathbb{E}[Y] > 0$, the result follows. 
\end{proof}

If we only know that the background risk has a heavy left tail (as in Theorem \ref{thm:main}), then the condition $\E{X} > \E{Y}$ is no longer sufficient to guarantee the dominance of $X$. Below we derive the suitable condition in this case. We say that \emph{$X$ strongly dominates $Y$ in the convex order}, if $\max[X] > \max[Y]$ and 
\begin{equation}\label{eq:convex-order}
    \int_{a}^{\infty} (F_Y(z) - F_X(z)) \, \dd z > 0 \, \text{~for all~} a < \max[X].
\end{equation}
In particular, this requires $\E{X} > \E{Y}$ in the limit $a \to -\infty$. 

To interpret this condition, note that $X$ dominates $Y$ in the convex order if and only if $-Y$ dominates $-X$ in second-order stochastic dominance. In other terms, $X$ can be obtained from $Y$ by a combination of mean-preserving spreads and right-ward mass shifts. Conversely, if $X$ is obtained from $Y$ by replacing \emph{each} realization $y$ of $Y$ by a gamble with expectation \emph{strictly greater} than $y$, then $X$ \emph{strongly} dominates $Y$ in the convex order. This is a natural generalization of the case studied in the main text, where $Y$ is a constant and $X$ is any gamble with a higher expectation. 

\begin{theorem}\label{thm:two-gambles}
Suppose $\max[X] \neq \max[Y]$. Then the following are equivalent:
\begin{itemize}
    \item[(i)] $X$ strongly dominates $Y$ in the convex order;
    \item[(ii)] there exists $s \in (0, \infty)$ such that under any background risk $W$ with $\mathrm{S}(W) \geq s$, choosing $X$ over $Y$ is dominant. 
\end{itemize}
\end{theorem}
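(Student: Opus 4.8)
The plan is to reduce the whole equivalence to a single scalar inequality on $D:=F_Y-F_X$. Fix $M$ so that $[-M,M]$ contains the supports of $X$ and $Y$; then $D$ is supported on $[-M,M]$, and $\Psi(a):=\int_a^\infty D(w)\,\dd w$ is Lipschitz with $\Psi'=-D$, $\Psi(a)=\E{X}-\E{Y}$ for $a<-M$, and (once $\max[X]\ge\max[Y]$) $\Psi\equiv0$ on $[\max[X],\infty)$. For $s>0$ set $\Theta_s(a):=\int_a^\infty D(w)\,\ee^{-(w-a)/s}\,\dd w$; integration by parts gives the identity I will lean on,
\[
  \Theta_s(a)\;=\;\Psi(a)\;-\;\frac{1}{s}\int_a^\infty\Psi(w)\,\ee^{-(w-a)/s}\,\dd w .
\]
Next I would establish the \emph{reduction}: for each fixed $s>0$, choosing $X$ over $Y$ is dominant under every background risk $W$ with $\mathrm{S}(W)\ge s$ if and only if $\Theta_s\ge0$ on $\RR$. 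Exactly as in the proof of Theorem~\ref{thm:main}, dominance of $W+X$ over $W+Y$ means $\int D(w)\,g(a-w)\,\dd w\ge0$ for all $a$; writing $h=\ln g$, the condition $\mathrm{S}(W)\ge s$ is $h'\le1/s$, and then $g(a-w)=g(a)\,q(w)\,\ee^{-w/s}$ with $q(w):=\ee^{\,h(a-w)-h(a)+w/s}$ strictly positive and \emph{non-decreasing} (its log-derivative is $1/s-h'(a-w)\ge0$). Conversely every positive non-decreasing $q$ on $[-M,M]$ is realized (or approximated) by some admissible $g$ and point $a$ — set $g(t)\propto q(a-t)\ee^{t/s}$ on the relevant window and extend to a density with maximal left log-slope $1/s$. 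So dominance for all admissible $W$ is equivalent to $\int D(w)\,q(w)\,\ee^{-w/s}\,\dd w\ge0$ for every positive non-decreasing $q$; writing $q$ through its Stieltjes measure, $q(w)=q(-M)+\int\mathbf{1}_{w>t}\,\dd q(t)$ with $\dd q\ge0$, this collapses to $\int_a^\infty D(w)\,\ee^{-w/s}\,\dd w\ge0$ for all $a$, i.e.\ to $\Theta_s\ge0$. (Equivalently, $\Theta_s\ge0$ says $\ee^{-Y/s}$ dominates $\ee^{-X/s}$ in second-order stochastic dominance; since $\mathrm{S}(W)\ge s$ forces $G=\phi(\ee^{\,\cdot/s})$ with $\phi$ increasing concave, one has $\E{G(a-Z)}=\E{\phi_a(\ee^{-Z/s})}$ with $\phi_a$ increasing concave, and this comparison is preserved under such maps.) Since $\{W:\mathrm{S}(W)\ge s'\}$ shrinks in $s'$, statement (ii) is equivalent to: $\Theta_s\ge0$ for some finite $s$ (hence for all larger $s$).

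For (i)$\Rightarrow$(ii): strong dominance in the convex order gives $\Psi(a)>0$ for $a<\max[X]$, $\Psi\equiv0$ on $[\max[X],\infty)$, and $\E{X}-\E{Y}=\Psi(-M)>0$, so $\Psi$ is a positive constant on $(-\infty,-M]$. On $(\max[Y],\max[X])$ we have $F_Y\equiv1>F_X$, so $D>0$ and $\Psi$ is \emph{strictly decreasing} there. Hence $R:=\sup_{a\in[-M,\,\max[X])}\frac{\int_a^\infty\Psi(w)\,\dd w}{\Psi(a)}$ is finite: near $\max[X]$ monotonicity gives $\int_a^\infty\Psi\le(\max[X]-a)\Psi(a)$, and away from $\max[X]$ the denominator is bounded below by a positive constant. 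For any $s\ge R$: for $a\in[-M,\max[X])$ the identity gives $\Theta_s(a)\ge\Psi(a)-\tfrac{1}{s}\int_a^\infty\Psi(w)\,\dd w\ge\Psi(a)(1-R/s)\ge0$; for $a<-M$, $\Theta_s(a)=\ee^{(a+M)/s}\Theta_s(-M)\ge0$; and $\Theta_s\equiv0$ on $[\max[X],\infty)$. So $\Theta_s\ge0$, i.e.\ (ii) holds.

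For (ii)$\Rightarrow$(i): by the reduction, $\Theta_s\ge0$ for some finite $s$; letting $s\to\infty$ gives $\Theta_s(a)\to\Psi(a)$, so $\Psi\ge0$. With $\max[X]\ne\max[Y]$ this forces $\max[X]>\max[Y]$, since $\max[X]<\max[Y]$ would give $\Psi(a)=\int_a^{\max[Y]}(F_Y-1)\,\dd w<0$ for $a\in(\max[X],\max[Y])$. Then $D>0$ on $(\max[Y],\max[X])$, so $\Psi$ is strictly decreasing there and, being $\ge0$ with $\Psi(\max[X])=0$, is strictly positive on $(\max[Y],\max[X])$. If $\Psi(a_0)=0$ for some $a_0<\max[X]$, then necessarily $a_0<\max[Y]$, so $(a_0,\max[X])$ contains points with $\Psi>0$; the identity then gives $\Theta_s(a_0)=-\tfrac{1}{s}\int_{a_0}^\infty\Psi(w)\,\ee^{-(w-a_0)/s}\,\dd w<0$, contradicting $\Theta_s\ge0$. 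Hence $\Psi>0$ on $(-\infty,\max[X])$, which with $\max[X]>\max[Y]$ is exactly (i) (and recovers $\E{X}>\E{Y}$, as $\Psi\equiv\E{X}-\E{Y}$ on $(-\infty,-M]$).

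I expect the main obstacle to be the reduction in the first paragraph: recognizing that letting $W$ range over all heavy-left-tailed risks of exponential size at least $s$ amounts to testing $D$ against every positive non-decreasing weight, which collapses to the single inequality $\Theta_s\ge0$ (equivalently, the SSD comparison of $\ee^{-X/s}$ and $\ee^{-Y/s}$). The second delicate point, inside (i)$\Rightarrow$(ii), is the uniformity that makes $s$ finite — the quotient $\int_a^\infty\Psi/\Psi(a)$ must stay bounded as $a\uparrow\max[X]$, where numerator and denominator both vanish; this is precisely where the \emph{strong} form of convex dominance enters, via $F_Y\equiv1>F_X$ on $(\max[Y],\max[X])$ and the resulting strict monotonicity of $\Psi$ near $\max[X]$, and it is also what guarantees $\E{X}>\E{Y}$.
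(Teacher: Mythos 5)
Your proof is correct, and its core reduction coincides with the paper's: for fixed $s$, letting $W$ range over all background risks with $\mathrm{S}(W)\ge s$ is equivalent to testing $F_Y-F_X$ against exponentially weighted monotone weights, which collapses to the single family of inequalities $\Theta_s(a)=\int_a^\infty (F_Y(z)-F_X(z))\,\ee^{-(z-a)/s}\,\dd z\ge 0$ --- exactly the paper's condition \eqref{eq:convex-order-with-exponential}, which the paper reaches instead via the statement that $\mathrm{S}(W)\ge s$ iff $G=\phi(\ee^{\cdot/s})$ with $\phi$ increasing concave, i.e.\ second-order dominance of $\ee^{-Y/s}$ over $\ee^{-X/s}$ (your parenthetical remark is precisely this). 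Where you genuinely diverge is in how the equivalence between ``$\Theta_s\ge 0$ for some $s$'' and strong convex dominance is proved. For (i)$\Rightarrow$(ii) the paper argues softly: as $s\to\infty$ the weighted integrals converge uniformly in $a$ on a compact interval to $\Psi(a)=\int_a^\infty(F_Y-F_X)$, which is bounded away from zero there, so some large $s$ works; you instead use the integration-by-parts identity $\Theta_s=\Psi-\frac1s\int_a^\infty\Psi(w)\ee^{-(w-a)/s}\dd w$ and the finiteness of $R=\sup_a \int_a^\infty\Psi/\Psi(a)$ (finite near $\max[X]$ because $\Psi$ is strictly decreasing on $(\max[Y],\max[X])$), which buys an explicit threshold $s\ge R$ rather than a mere existence statement. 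For (ii)$\Rightarrow$(i) the paper uses the inverse identity expressing $\Psi$ as $\Theta_s$ plus a double integral of $\Theta_s$, getting positivity and strictness in one stroke; you instead first deduce $\Theta_{s'}\ge0$ for all larger $s'$ (legitimate, since the families $\{W:\mathrm{S}(W)\ge s'\}$ are nested and your reduction is an equivalence at each level), let $s'\to\infty$ to get $\Psi\ge 0$, pin down $\max[X]>\max[Y]$, and then rule out an interior zero $a_0$ of $\Psi$ by evaluating your identity at $a_0$. Both routes are sound; the only place where your write-up is terser than the paper is the ``only if'' half of the reduction (realizing the step-type weights by admissible densities with left log-slope exactly $1/s$ and a fast decay to the right), but this is the same approximation issue the paper relegates to a footnote, and your sketch of the construction is adequate.
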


\begin{proof}[Proof of Theorem \ref{thm:two-gambles}]
As in the proof of Theorem \ref{thm:main}, choosing $X$ over $Y$ is dominant if and only if 
\[
\mathbb{E}[G(a-X)] \leq \mathbb{E}[G(a-Y)] \, \text{~for all~} a\in\RR \,.
\]
Since we want this to hold for all background risks $G$ with exponential size $\geq s$, and since the exponential size is translation-invariant, it is without loss to restrict to the case of $a = 0$. That is, we seek to understand the conditions under which 
\[
\mathbb{E}[G(-X)] \leq \mathbb{E}[G(-Y)] \, \text{~for all~} G \text{~with exponential size~} \geq s.
\]
As before, let $U(a) = \ee^{\frac{a}{s}}$ denote a risk-loving CARA utility function. Then $G$ has exponential size at least $s$ if and only if $G(a) = \phi(U(a))$ for some increasing concave function $\phi$.\footnote{To be fully rigorous, we also need $g(a) = \phi'(\ee^{a/s}) \cdot \frac{1}{s} \ee^{a/s}$ to be strictly positive, continuously differentiable, and eventually decreasing. These additional restrictions on $\phi$ do not affect the subsequent analysis because on any compact domain, any increasing concave function can be uniformly approximated by another increasing concave function with these additional properties.} Thus, the above comparison can be rewritten as 
\[
\mathbb{E}\left[\phi\left(\ee^{\frac{-X}{s}}\right)\right] \leq \mathbb{E}\left[\phi\left(\ee^{\frac{-Y}{s}}\right)\right] \, \text{~for all increasing concave functions~} \phi.
\]
In other terms, the random variable $\tilde{Y} = \ee^{\frac{-Y}{s}}$ should dominate $\tilde{X} = \ee^{\frac{-X}{s}}$ with respect to second-order stochastic dominance. 

Let $\tilde{F}_X$ and $\tilde{F}_Y$ denote the c.d.f.\ of $\tilde{X}$ and $\tilde{Y}$, respectively. Then second-order stochastic dominance holds if and only if (noting that $\tilde{X}$ and $\tilde{Y}$ are both supported on $\mathbb{R}_{+}$):
\[
\int_{0}^{c} (\tilde{F}_X(t) - \tilde{F}_Y(t)) \, \dd t \geq 0 \, \text{~for all~} c > 0.
\]
If we write $t = \ee^{-\frac{z}{s}}$, then $\tilde{F}_X(t) = 1 - F_X(z)$, $\tilde{F}_Y(t) = 1 - F_Y(z)$. Changing variables in the above integral, and denoting $a=-s\ln(c)$, we obtain the following equivalent condition (modulo a factor of $1/s$): 
\begin{equation}\label{eq:convex-order-with-exponential}
\int_{a}^{\infty} (F_Y(z) - F_X(z)) \cdot \ee^{-\frac{z}{s}} \, \dd z \geq 0 \, \text{~for all~} a \in \mathbb{R}.
\end{equation}
Below we show that when the maxima of $X$ and $Y$ are different, the above condition holds for some positive $s$ if and only if $X$ strongly dominates $Y$ in the convex order.

In one direction, suppose $\max[X] > \max[Y]$ and \eqref{eq:convex-order} holds. Then intuitively \eqref{eq:convex-order-with-exponential} would also hold if $s$ is large, in which case the integrand $(F_Y(z) - F_X(z)) \cdot \ee^{-\frac{z}{s}}$ is close to $F_Y(z) - F_X(z)$. This can be formalized by observing that we only need to prove \eqref{eq:convex-order-with-exponential} for $a$ in the compact interval $\min[X] \leq a \leq \max[Y]$. As $s \to \infty$ the integral $\int_{a}^{\infty} (F_Y(z) - F_X(z)) \cdot \ee^{-\frac{z}{s}} \, \dd z$ converges uniformly to $\int_{a}^{\infty} (F_Y(z) - F_X(z)) \, \dd z$ on this interval. Since this limit is a continuous function in $a$ and strictly positive on this interval, it is bounded away from $0$. Thus by uniform convergence, there exists some large $s$ such that \eqref{eq:convex-order-with-exponential} holds. 
 
For the converse, suppose \eqref{eq:convex-order-with-exponential} holds for some $s$. Then there cannot exist some $a$ with $F_Y(a) < 1 = F_X(a)$, since otherwise \eqref{eq:convex-order-with-exponential} fails at this point $a$. It follows that $\max[X] \geq \max[Y]$, and the inequality is in fact strict by the assumption that $\max[X] \neq \max[Y]$. As a result, $F_Y(z) - F_X(z)$ is strictly positive for $z \in [\max[Y], ~\max[X])$, and \eqref{eq:convex-order-with-exponential} holds with strict inequality for $a$ in the same interval. We now use this to prove \eqref{eq:convex-order}. Observe that
\begin{align*}
\lefteqn{\int_{a}^{\infty} (F_Y(z) - F_X(z))  \, \dd z }\\ &=\ee^{\frac{a}{s}} \int_{a}^{\infty} (F_Y(z) - F_X(z)) \cdot \ee^{-\frac{z}{s}} \, \dd z + \int_{a}^{\infty} \left(\frac{\ee^{\frac{c}{s}}}{s} \cdot  \int_{c}^{\infty} (F_Y(z) - F_X(z)) \cdot \ee^{-\frac{z}{s}} \, \dd z \right)\, \dd c.
\end{align*}
So from \eqref{eq:convex-order-with-exponential}, we must have $\int_{a}^{\infty} (F_Y(z) - F_X(z))  \, \dd z \geq 0$. Moreover, the inequality is strict because in the double integral on the RHS above, the term $\int_{c}^{\infty} (F_Y(z) - F_X(z)) \cdot \ee^{-\frac{z}{s}} \, \dd z$ is strictly positive for any $c \in [\max[Y], ~\max[X])$. For any $a < \max[X]$, the mass of such $c > a$ is strictly positive. Hence \eqref{eq:convex-order} holds with strict inequality, completing the proof.
\end{proof}

\section{Additional Results}

\begin{proposition}\label{prop:risk-bound}
For any gamble $X$ that is supported on $[-M, M]$ and has expectation $\epsilon > 0$, its riskiness index satisfies $\mathrm{R}(X) \leq \frac{M^2}{\epsilon}$.
\end{proposition}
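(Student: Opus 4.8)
The plan is to unwind the definition of the Aumann--Serrano index and then exploit convexity twice. Writing $\gamma = 1/\mathrm{R}(X)$, the index is pinned down by the equation $f(\gamma) = 1$, where $f(\gamma) := \E{\ee^{-\gamma X}}$. First I would record the standard structural facts about $f$ on $[0,\infty)$: it is finite (since $X$ is bounded) and strictly convex (since $X$ is non-degenerate), with $f(0) = 1$, $f'(0) = -\E{X} = -\epsilon < 0$, and $f(\gamma) \to \infty$ as $\gamma \to \infty$ because $X$ takes negative values with positive probability. Hence $f$ has a unique positive root $\gamma^\ast = 1/\mathrm{R}(X)$, and by convexity $f > 1$ on $(\gamma^\ast, \infty)$. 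Consequently, if $f(\gamma_0) \le 1$ for some $\gamma_0 > 0$ then necessarily $\gamma_0 \le \gamma^\ast$. Since $\mathrm{R}(X) \le M^2/\epsilon$ is equivalent to $\gamma^\ast \ge \epsilon/M^2$, it therefore suffices to verify the single inequality $f(\epsilon/M^2) \le 1$.

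The second step evaluates $f$ at $\gamma_0 := \epsilon/M^2$ using the boundedness of $X$. Because $x \mapsto \ee^{-\gamma_0 x}$ is convex, on $[-M, M]$ it lies below the chord through its endpoints:
\[
\ee^{-\gamma_0 x} \;\le\; \tfrac{M-x}{2M}\,\ee^{\gamma_0 M} + \tfrac{M+x}{2M}\,\ee^{-\gamma_0 M} \qquad \text{for all } x \in [-M,M].
\]
The right-hand side is affine in $x$, so taking expectations and using $\E{X} = \epsilon$ gives $f(\gamma_0) \le \tfrac{M-\epsilon}{2M}\ee^{\gamma_0 M} + \tfrac{M+\epsilon}{2M}\ee^{-\gamma_0 M}$. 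Setting $p := \gamma_0 M = \epsilon/M$ (note $\tfrac{M-\epsilon}{2M} = \tfrac{1-p}{2}$), this reads $f(\gamma_0) \le h(p)$ with $h(p) := \tfrac{1-p}{2}\ee^{p} + \tfrac{1+p}{2}\ee^{-p}$. The proof then closes with a one-line calculus check: $h(0) = 1$ and $h'(p) = -p\cosh p \le 0$ for $p \ge 0$, so $h$ is non-increasing on $[0,\infty)$ and hence $f(\epsilon/M^2) \le h(\epsilon/M) \le h(0) = 1$, as required.

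I expect the only real obstacle to be the bookkeeping in the first step: one must argue carefully that $f(\gamma_0) \le 1$ forces $\gamma_0 \le \gamma^\ast$ (and not the reverse), which relies on combining the convexity of $f$ with $f'(0) < 0$ and $f \to \infty$. Once that reduction is in place, the remainder is a deterministic inequality that uses nothing about $X$ beyond its support $[-M,M]$ and its mean $\epsilon$; the chord bound and the clean identity $h'(p) = -p\cosh p$ do all the work. (A second-order Taylor bound on $\ee^{-\gamma_0 x}$ with Lagrange remainder over $[-M,M]$ would give the same conclusion, but the chord computation is the tidiest.)
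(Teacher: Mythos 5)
Your proof is correct and follows essentially the same route as the paper's: reduce, via convexity of $\gamma \mapsto \E{\ee^{-\gamma X}}$, to verifying the single inequality $\E{\ee^{-\lambda X}} \leq 1$ at $\lambda = \epsilon/M^2$, and then conclude $1/\mathrm{R}(X) \geq \lambda$. The only difference is in how that inequality is checked---the paper uses the quadratic bound $\ee^{y} \leq 1 + y + y^2$ for $\vert y \vert \leq 1$ together with $\E{X^2} \leq M^2$, while you use the chord bound for the exponential on $[-M,M]$ and the monotonicity of $h(p) = \tfrac{1-p}{2}\ee^{p} + \tfrac{1+p}{2}\ee^{-p}$; both are equally elementary and valid.
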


\begin{proof}[Proof of Proposition~\ref{prop:risk-bound}]
Let $\lambda = \frac{\epsilon}{M^2}$. We first show that $\mathbb{E}[\ee^{-\lambda X}] \leq 1$. Indeed, since $\epsilon = \mathbb{E}[X] \leq M$, we have $\lambda \leq \frac{1}{M}$. As $X \in [-M, M]$ with probability one, we have $-\lambda X \in [-1, 1]$. In this range, it always holds that $\ee^{-\lambda X} \leq 1 - \lambda X + (\lambda X)^2$. Hence $\mathbb{E}[\ee^{-\lambda X}] \leq 1 - \lambda \mathbb{E}[X] + \lambda^2 \mathbb{E}[X^2] \leq 1 - \lambda \epsilon + \lambda^2 M^2 = 1$.
 
Now consider the function $f(a) = \mathbb{E}[\ee^{-aX}]$, defined for $a \geq 0$. It is easy to see that $f(0) = 0$ and $f$ is strictly convex. Thus, $\frac{1}{\mathrm{R}(X)}$ is the unique number $c > 0$ such that $f(c) = 0$. Since we just proved that $f(\lambda) \leq 0$, convexity implies $c \geq \lambda$. In other words $\frac{1}{\mathrm{R}(X)} \geq \frac{\epsilon}{M^2}$. 
\end{proof}

\section{Background Risk for Cumulative Prospect Theory Preferences}\label{appx:CPT}

Table~\ref{tab:CPT-table} below shows the levels of background risk needed to make a decision maker with cumulative prospect theory (CPT) preferences to accept various gambles. The specific CPT preference we consider has gain/loss probability weighting functions $w^+(p)=\frac{p^\gamma}{(p^\gamma+(1-p)^\gamma)^{\nicefrac{1}{\gamma}}}$, $w^-(p)=\frac{p^\delta}{(p^\delta+(1-p)^\delta)^{\nicefrac{1}{\delta}}}$ with $\gamma=0.61,\delta=0.69$, loss aversion parameter $\lambda=2.25$ and value function $v(x)= x^{0.88}$ for $x \geq 0$ and $v(x) = -\lambda(-x)^{0.88}$ for $x < 0$. These parameter values are taken from \citet[pages 309--312]{tversky1992advances}.

 \bgroup
\def\arraystretch{1.4}
\begin{table}[t]
    \centering
    \begin{tabular}{l|l l l}
        \toprule
         \textbf{Gamble} & \multicolumn{3}{c}{\textbf{StDeviation of Background Risk: $\sigma$}} \\
         Gain/Loss & Laplace &  Logistic & Normal \\
         \midrule
         $\$11/\$10$ & $\sigma \geq \$62$ & $\sigma \geq \$46$ & $\sigma \geq \$44$ \\
         $\$55/\$50$ & $\sigma \geq \$306$ & $\sigma \geq \$230$ & $\sigma \geq \$217$ \\
         $\$110/\$100$ & $\sigma \geq \$612$ & $\sigma \geq \$460$ & $\sigma \geq \$434$ \\
         $\$550/\$500$ & $\sigma \geq \$3058$ & $\sigma \geq \$2299$ & $\sigma \geq \$2169$ \\
         $\$1000/\$1100$ & $\sigma \geq \$6115$ & $\sigma \geq \$4598$ & $\sigma \geq \$4338$ \\
         \bottomrule
    \end{tabular}
    \caption{Standard deviation of background risk sufficient for a CPT decision maker to accept various fifty-fifty gambles under different distributional assumptions on the background risk.  \label{tab:CPT-table}}
\end{table}
\egroup

\bigskip

\bibliography{refs}

\begin{thebibliography}{30}
\providecommand{\natexlab}[1]{#1}
\providecommand{\url}[1]{\texttt{#1}}
\expandafter\ifx\csname urlstyle\endcsname\relax
  \providecommand{\doi}[1]{doi: #1}\else
  \providecommand{\doi}{doi: \begingroup \urlstyle{rm}\Url}\fi

\bibitem[Andersen et~al.(2018)Andersen, Cox, Harrison, Lau, Rutstr\"{o}m, and
  Sadiraj]{harrison2018integration}
S.~Andersen, J.~C. Cox, G.~W. Harrison, M.~I. Lau, E.~E. Rutstr\"{o}m, and
  V.~Sadiraj.
\newblock Asset integration and attitudes toward risk: Theory and evidence.
\newblock \emph{The Review of Economics and Statistics}, 100\penalty0
  (5):\penalty0 816--830, 2018.

\bibitem[Ang et~al.(2005)Ang, Bekaert, and Liu]{AngBekaertLiu2005stocks}
A.~Ang, G.~Bekaert, and J.~Liu.
\newblock Why stocks may disappoint.
\newblock \emph{Journal of Financial Economics}, 76:\penalty0 471--508, 2005.

\bibitem[Arrow(1970)]{arrow1970essays}
K.~J. Arrow.
\newblock \emph{Essays in the theory of risk-bearing}.
\newblock Markham Publishing Company, 1970.

\bibitem[Aumann and Serrano(2008)]{aumann2008economic}
R.~J. Aumann and R.~Serrano.
\newblock An economic index of riskiness.
\newblock \emph{Journal of Political Economy}, 116\penalty0 (5):\penalty0
  810--836, 2008.

\bibitem[Barberis et~al.(2006)Barberis, Huang, and
  Thaler]{barberis2006individual}
N.~Barberis, M.~Huang, and R.~H. Thaler.
\newblock Individual preferences, monetary gambles, and stock market
  participation: A case for narrow framing.
\newblock \emph{American economic review}, 96\penalty0 (4):\penalty0
  1069--1090, 2006.

\bibitem[Bardgett et~al.(2019)Bardgett, Gourier, and
  Leippold]{bardgett2019inferring}
C.~Bardgett, E.~Gourier, and M.~Leippold.
\newblock Inferring volatility dynamics and risk premia from the s\&p 500 and
  vix markets.
\newblock \emph{Journal of Financial Economics}, 131\penalty0 (3):\penalty0
  593--618, 2019.

\bibitem[Benartzi and Thaler(1995)]{BenartziThaler1995myopic}
S.~Benartzi and R.~H. Thaler.
\newblock Myopic loss aversion and the equity premium puzzle.
\newblock \emph{Quarterly Journal of Economics}, 110:\penalty0 73--92, 1995.

\bibitem[Cox and Sadiraj(2006)]{CoxSadiraj2006calibration}
J.~C. Cox and V.~Sadiraj.
\newblock Small- and large-stakes risk aversion: implications of concavity
  calibration for decision theory.
\newblock \emph{Games and Economic Behavior}, 56:\penalty0 45--60, 2006.

\bibitem[Gneezy and Potters(1997)]{GneezyPotters1997experiment}
U.~Gneezy and J.~Potters.
\newblock An experiment on risk taking and evaluation periods.
\newblock \emph{Quarterly Journal of Economics}, 112\penalty0 (2):\penalty0
  631--645, 1997.

\bibitem[Gollier(2004)]{gollier2004economics}
C.~Gollier.
\newblock The economics of risk and time.
\newblock \emph{MIT Press}, 2004.

\bibitem[Gul(1991)]{Gul1991disappoint}
F.~Gul.
\newblock A theory of disappointment aversion.
\newblock \emph{Econometrica}, 59\penalty0 (3):\penalty0 667--686, 1991.

\bibitem[Hansson(1988)]{Hansson1988calibration}
B.~Hansson.
\newblock Risk aversion as a problem of conjoint measurement.
\newblock In P.~G\"{a}rdenfors and N.-E. Sahlin, editors, \emph{Decisions,
  Probability and Utility}. Cambridge University Press, 1988.

\bibitem[Kahneman and Tversky(1979)]{KahnemanTversky1979prospecttheory}
D.~Kahneman and A.~Tversky.
\newblock Prospect theory: an analysis of decision under risk.
\newblock \emph{Econometrica}, 47:\penalty0 263--291, 1979.

\bibitem[Khaw et~al.(2020)Khaw, Li, and Woodford]{KhawEtAl2020cognitive}
M.~W. Khaw, Z.~Li, and M.~Woodford.
\newblock Cognitive imprecision and small-stakes risk aversion.
\newblock \emph{Review of Economic Studies}, 2020.
\newblock Forthcoming.

\bibitem[K{\H{o}}szegi and Rabin(2007)]{koszegi2007reference}
B.~K{\H{o}}szegi and M.~Rabin.
\newblock Reference-dependent risk attitudes.
\newblock \emph{American Economic Review}, 97\penalty0 (4):\penalty0
  1047--1073, 2007.

\bibitem[Morris and Yildiz(2019)]{morris2019crises}
S.~Morris and M.~Yildiz.
\newblock Crises: Equilibrium shifts and large shocks.
\newblock \emph{American Economic Review}, 109\penalty0 (8):\penalty0 2823--54,
  2019.

\bibitem[Pomatto et~al.(2020)Pomatto, Strack, and Tamuz]{pomatto2018stochastic}
L.~Pomatto, P.~Strack, and O.~Tamuz.
\newblock Stochastic dominance under independent noise.
\newblock \emph{Journal of Political Economy}, 128\penalty0 (5):\penalty0
  1877--1900, 2020.

\bibitem[Pratt(1964)]{pratt1964risk}
J.~W. Pratt.
\newblock Risk aversion in the small and in the large.
\newblock \emph{Econometrica}, 32\penalty0 (1-2):\penalty0 122--136, 1964.

\bibitem[Quiggin(1982)]{Quiggin1982rdu}
J.~Quiggin.
\newblock A theory of anticipated utility.
\newblock \emph{Journal of Economic Behavior and Organization}, 3:\penalty0
  323--343, 1982.

\bibitem[Rabin(2000)]{rabin2000calibration}
M.~Rabin.
\newblock Risk aversion and expected-utility theory: A calibration theorem.
\newblock \emph{Econometrica}, 68\penalty0 (5):\penalty0 1281--1292, 2000.

\bibitem[Rabin and Thaler(2001)]{RabinThaler2001anomaly}
M.~Rabin and R.~H. Thaler.
\newblock Anomalies: Risk aversion.
\newblock \emph{Journal of Economic Perspectives}, 15\penalty0 (1):\penalty0
  219--232, 2001.

\bibitem[Rabin and Weizs{\"a}cker(2009)]{Rabin2009dominance}
M.~Rabin and G.~Weizs{\"a}cker.
\newblock Narrow bracketing and dominated choices.
\newblock \emph{American Economic Review}, 99\penalty0 (4):\penalty0 1508--43,
  2009.

\bibitem[Rubinstein(2006)]{Rubinstein2006dilemma}
A.~Rubinstein.
\newblock Dilemmas of an economic theorist.
\newblock \emph{Econometrica}, 74\penalty0 (4):\penalty0 865--883, 2006.

\bibitem[Safra and Segal(2008)]{SafraSegal2008calibration}
Z.~Safra and U.~Segal.
\newblock Calibration results for non-expected utility theories.
\newblock \emph{Econometrica}, 76\penalty0 (5):\penalty0 1143--1166, 2008.

\bibitem[Sarver(2018)]{sarver2018dynamic}
T.~Sarver.
\newblock Dynamic mixture-averse preferences.
\newblock \emph{Econometrica}, 86\penalty0 (4):\penalty0 1347--1382, 2018.

\bibitem[Segal and Spivak(1990)]{SegalSpivak1990firstorder}
U.~Segal and A.~Spivak.
\newblock First order versus second order risk aversion.
\newblock \emph{Journal of Economic Theory}, 51:\penalty0 111--125, 1990.

\bibitem[Tarsney(2018)]{tarsney2018exceeding}
C.~Tarsney.
\newblock Exceeding expectations: Stochastic dominance as a general decision
  theory.
\newblock \emph{arXiv preprint arXiv:1807.10895}, 2018.

\bibitem[Thaler et~al.(1997)Thaler, Tversky, Kahneman, and
  Scharwtz]{ThalerEtAl1997experiment}
R.~H. Thaler, A.~Tversky, D.~Kahneman, and A.~Scharwtz.
\newblock The effect of myopia and loss aversion on risk taking: an
  experimental test.
\newblock \emph{Quarterly Journal of Economics}, 112\penalty0 (2):\penalty0
  647--661, 1997.

\bibitem[Tversky and Kahneman(1992)]{tversky1992advances}
A.~Tversky and D.~Kahneman.
\newblock Advances in prospect theory: Cumulative representation of
  uncertainty.
\newblock \emph{Journal of Risk and uncertainty}, 5\penalty0 (4):\penalty0
  297--323, 1992.

\bibitem[Zambrano(2020)]{zambrano2020risk}
E.~Zambrano.
\newblock Risk attitudes over small and large stakes recalibrated.
\newblock \emph{Economics Letters}, 187:\penalty0 108899, 2020.

\end{thebibliography}

\end{document}